\newtheorem{algunif}{Unification Algorithm}
\newtheorem{definition}{Definition} 
\newtheorem{example}{Example} 
\newtheorem{lemma}{Lemma} 
\newtheorem{proposition}{Proposition} 
\newtheorem{theorem}{Theorem} 
\newcommand{\ie}{\textit{i.e.}, }
\newcommand{\eg}{\textit{e.g.}, }
\newcommand{\uppereg}{\textit{E.g.}, }
\newcommand*{\wrt}{{w.r.t.}}
\newcommand{\cC}{\mathcal{C}}
\newcommand{\rra}{\mathop{\Rightarrow}}
\newcommand{\equi}{\mathop{\thicksim}}
\newcommand{\nat}{\mathbb{N}}
\newcommand{\termset}{T(\Sigma, X)}
\newcommand{\termsett}{T(\Sigma \cup H, X)}
\newcommand{\stermset}{T(\Sigma \cup \Upsilon, X)}
\newcommand{\subsset}{S(\Sigma, X)}
\newcommand{\ssubsset}{S(\Sigma \cup \Upsilon, X)}
\newcommand{\scontext}{\chi^{(1)}}
\newcommand{\binunf}{\operatorname{\mathit{binunf}}}
\newcommand{\patunf}{\operatorname{\mathit{patunf}}}
\newcommand{\binruleset}{\Re}
\newcommand{\patruleset}{\Im}
\newcommand{\calls}{\operatorname{\mathit{calls}}}
\newcommand{\rules}{\operatorname{\mathit{rules}}}
\newcommand{\vars}{\operatorname{\mathit{Var}}}
\newcommand{\dom}{\operatorname{\mathit{Dom}}}
\newcommand{\ran}{\operatorname{\mathit{Ran}}}
\newcommand{\mgu}{\operatorname{\mathit{mgu}}}
\newcommand{\idsub}{\emptyset}
\newcommand{\id}{\operatorname{\mathit{id}}}
\newcommand{\patterm}[3]{{{#1} \star {#2} \star {#3}}}
\newcommand{\pattermm}[2]{{{#1} \star {#2}}}
\newcommand{\patsub}[2]{{{#1} \star {#2}}}
\newcommand{\patid}{\operatorname{\mathit{patid}}}
\newcommand{\pt}[1]{{#1}^{\star}}
\newcommand{\qe}{\pt{\emptyseq}}
\newcommand{\fsym}{\operatorname{\mathsf{f}}}
\newcommand{\ssym}{\operatorname{\mathsf{s}}}
\newcommand{\while}{\operatorname{\mathsf{while}}}
\newcommand{\mle}{\operatorname{\mathsf{le}}}
\newcommand{\gt}{\operatorname{\mathsf{gt}}}
\newcommand{\add}{\operatorname{\mathsf{add}}}
\newcommand{\cons}{\operatorname{\mathsf{cons}}}
\newcommand{\nil}{\operatorname{\mathsf{nil}}}
\newcommand{\islist}{\operatorname{\mathsf{isList}}}
\newcommand{\zero}{\mathsf{0}}
\newcommand{\one}{\mathsf{1}}
\newcommand{\sequence}[1]{\left\langle{#1}\right\rangle}
\newcommand{\bigsequence}[1]{\big\langle{#1}\big\rangle}
\newcommand{\seqset}[1]{{\overline{{#1}}}}
\newcommand{\emptyseq}{{\mathsf{e}}}
\newcommand{\nti}{\textsf{NTI}}
\begin{document}

\lefttitle{Cambridge Author}

\jnlPage{1}{25}
\jnlDoiYr{2025}
\doival{10.1017/xxxxx}

\title[Non-Termination of Logic Programs Using Patterns]
{Non-Termination of Logic Programs Using Patterns%
\thanks{This work was partially funded
by the European Program FEDER/INTERREG VI 2021-2027
(grant 2024-1443-004977).}}

\begin{authgrp}
    \author{\sn{Etienne} \gn{Payet}}
    \affiliation{LIM - Université de la Réunion, France \\\email{etienne.payet@univ-reunion.fr}}
\end{authgrp}
%
%

\maketitle

\begin{abstract}
    In this paper, we consider an approach introduced
    in term rewriting for the automatic detection of
    non-looping non-termination from \emph{patterns}
    of rules. We adapt it to logic programming by
    defining a new unfolding technique that
    produces patterns describing possibly infinite
    sets of finite rewrite sequences. We present an
    experimental evaluation of our contributions that
    we implemented in our tool \nti.
\end{abstract}

\begin{keywords}
    Non-Termination, Non-Loops, Unfolding, Logic Programming
\end{keywords}

\section{Introduction}
This paper is concerned with non-termination in
logic programming, where one rewrites finite 
sequences of terms (called \emph{queries})
according to the operational semantics described,
\eg by~\cite{apt97}.
Rewriting is formalised by binary relations
$\rra_r$ indexed by rules $r$ from the logic
program under consideration and non-termination
by the existence of an infinite rewrite sequence
$Q_0 \rra_{r_1} Q_1 \rra_{r_2} \cdots$
(where the $Q_i$s are queries).
Our motivations are theoretical (study remarkable
forms of infinite rewrite sequences) and practical
(help programmers to detect bugs by providing queries
that run forever).

Most papers related to this topic provide
necessary or sufficient conditions
for the existence of \emph{loops}, \ie finite
rewrite sequences
$Q_0 \rra_{r_1} \cdots \rra_{r_n} Q_n$
where $Q_n$ satisfies a condition $\cC$
that entails the possibility of starting
again, \ie
$Q_n \rra_{r_1} \cdots \rra_{r_n} Q_{2n}$
holds and $Q_{2n}$ also satisfies $\cC$,
and so on.
For example, \cite{payetM06} present a
sufficient condition, based on \emph{neutral}
argument positions of predicate symbols,
which is applied to elements of the
\emph{binary unfolding}
(a set of rules that exhibits the
termination properties of logic programs,
see the paper by~\cite{codishT99}).

In this paper, we are rather interested in
\emph{non-looping} non-termination, \ie
infinite rewrite sequences that do not embed
any loop.
The non-periodic nature of such sequences
makes them difficult to detect, while they
can be produced from simple logic programs,
as those used in our experiments
(Sect.~\ref{sect:experiments}).
We are inspired by the approach
of~\cite{emmesEG12}, introduced in the 
context of term rewriting\footnote{The
operational semantics of term rewriting differs
from that of logic programming: a crucial
difference is that the rewrite relation of
term rewriting is based on instantiation
while that of logic programming relies
on unification.}.
This approach considers \emph{pattern terms}
(``abstract'' terms describing possibly
infinite sets of concrete terms) as well as
\emph{pattern rules} built from pattern
terms. From the term rewrite system under
analysis, it produces pattern rules that are
\emph{correct}, \ie they describe sets of
finite rewrite sequences \wrt{} the
operational semantics of term rewriting.
Nine inference rules are provided to derive
correct pattern rules, as well as a strategy
for their automated application and a sufficient
condition to detect non-looping non-termination. 

We adapt this approach to logic programming.
Our main contributions are:
(i) the definition of a new unfolding
technique that produces correct pattern rules
\wrt{} the operational semantics of logic
programming (this gives a more compact
presentation than  the nine inference rules
of~\cite{emmesEG12} and we do not need
application strategies);
(ii) the definition of a restricted form of
pattern terms, called \emph{simple}, for
which we provide a unification algorithm
(needed to compute the unfolding)
that we prove correct;
(iii) an easily automatable sufficient
condition to detect non-looping non-termination
from pattern rules built from simple pattern
terms;
(iv) the implementation of a non-termination
approach based on these notions 
in our tool \nti, that we have evaluated on
logic programs resulting from the translation
of term rewrite systems used in the
experiments of~\cite{emmesEG12}.
As far as we know, our approach is the
first capable of proving non-termination
of these logic programs automatically.

The paper is organised as follows.
Sect.~\ref{sect:prel} introduces basic
definitions and notations
(a running example illustrating our
contributions starts from Sect.~\ref{sect:binunf}),
Sect.~\ref{sect:patterns} presents our
adaptation of patterns to logic programming,
Sect.~\ref{sect:simple-patterns}
considers the notion of simple pattern,
Sect.~\ref{sect:experiments} presents an
experimental evaluation,
Sect.~\ref{sect:rel-work} describes
related work and
Sect.~\ref{sect:conclusion} concludes with
future work.

\section{Preliminaries}
\label{sect:prel}

We let $\nat$ denote the set of natural numbers.
Let $A$ be a set. Then, $\seqset{A}$ is the set
of finite sequences of elements of $A$, which
includes the empty sequence, denoted
as $\emptyseq$.
We use the delimiters $\langle$ and
$\rangle$ for writing elements of
$\seqset{A}$ and juxtaposition to denote
the concatenation operation,
\eg $\sequence{a_0,a_1}\sequence{a_2,a_3} =
\sequence{a_0,a_1,a_2,a_3}$.
We generally denote elements
of $\seqset{A}$ using lowercase letters
with an overline, \eg $\seqset{a}$.

\subsection{Binary Relations}\label{sect:binary-relation}

A binary relation $\phi$ on a set $A$ is a
subset of $A^2 = A \times A$. For all
$\varphi \subseteq A^2$, the \emph{composition}
of $\phi$ and $\varphi$ is
$\phi \circ \varphi = \left\{(a,a') \in A^2
\;\middle\vert\;
\exists a_1 \in A: \ (a,a_1) \in \phi \land
(a_1,a') \in \varphi \right\}$.
We let $\phi^0$ be the identity relation and,
for any $n\in\nat$, $\phi^{n+1}=\phi^n \circ \phi$.
Moreover, $\phi^+ = \bigcup \left\{\phi^n \mid n > 0 \right\}$
(resp. $\phi^* = \phi^0 \mathop{\cup} \phi^+$)
is the transitive (resp. reflexive and
transitive) \emph{closure} of $\phi$.
A \emph{$\phi$-chain} (or \emph{chain} if
$\phi$ is clear from the context) is a (possibly
infinite) sequence of elements of $A$ such that
$(a,a') \in \phi$ for any two consecutive elements
$a,a'$.
For binary relations that have the form of an
arrow, \eg $\rra$, we may write chains
$a_0,a_1,\dots$ as $a_0 \rra a_1 \rra \cdots$.

\subsection{Terms and Substitutions}
\label{sect:term-subs}
We use the same definitions and notations as~\cite{baaderN98}
for terms.
A \emph{signature} is a set of \emph{function symbols}, each
element of which has an \emph{arity} in $\nat$
(the $0$-ary elements are called \emph{constant symbols}).
We denote function symbols by words in the
\emph{sans serif} font,
\eg $\fsym$, $\zero$, $\while$\dots

Let $\Sigma$ be a signature and $X$ be a set of
\emph{variables} disjoint from $\Sigma$.
For all $m \in \nat$, we let $\Sigma^{(m)}$
denote the set of all $m$-ary elements of $\Sigma$.
The set $\termset$ of all \emph{$\Sigma$-terms over $X$}
(or simply \emph{terms} if $\Sigma,X$ are clear
from the context) is defined as: $X \subseteq \termset$
and, for all $m\in\nat$, all $\fsym \in \Sigma^{(m)}$
and all $s_1,\dots,s_m \in \termset$,
$\fsym(s_1,\dots,s_m) \in \termset$.
For all $s \in \termset$, we let $\vars(s)$ denote
the set of variables occurring in $s$.
%
We use the superscript notation to
denote several successive applications of a
unary function symbol, \eg $\ssym^3(\zero)$
is a shortcut for $\ssym(\ssym(\ssym(\zero)))$
and $\ssym^0(\zero) = \zero$.

A \emph{$\termset$-substitution} (or simply
\emph{substitution} if $\termset$ is clear
from the context) is a function $\theta$
from $X$ to $\termset$ such that $\theta(x) \neq x$
for only finitely many variables $x$.
The \emph{domain} of $\theta$ is 
$\dom(\theta)=\{x \in X \mid \theta(x) \neq x\}$.
We let $\ran(\theta) =
\bigcup \{\vars(\theta(x)) \mid x \in \dom(\theta)\}$
and $\vars(\theta) = \dom(\theta) \cup \ran(\theta)$.
We usually write $\theta$ as
$\{x_1\mapsto\theta(x_1), \dots, x_m\mapsto\theta(x_m)\}$
where $\{x_1,\dots,x_m\} = \dom(\theta)$
(hence, the identity substitution is written
as $\idsub$).
%
%
%
A \emph{(variable) renaming} is a substitution that
is a bijection on $X$.
We let $\subsset$ denote the set of all
$\termset$-substitutions.

The application of $\theta \in \subsset$ to
$s \in \termset$, denoted as $s\theta$, is
defined as: $s\theta = \theta(s)$ if $s\in X$ and
$s\theta = \fsym(s_1\theta,\dots,s_m\theta)$
if $s = \fsym(s_1,\dots,s_m)$.
Then, $s\theta$ is called an \emph{instance}
of $s$. Application 
is extended to finite sequences of terms:
$\sequence{s_1,\dots,s_m}\theta = 
\sequence{s_1\theta,\dots,s_m\theta}$.

The \emph{composition} of $\sigma,\theta \in \subsset$
is the $\termset$-substitution denoted as
$\sigma\theta$ and defined as: for all $x \in X$,
$\sigma\theta(x) = (\sigma(x))\theta$.
This is an associative operation, \ie
for all $s \in \termset$,
$(s\sigma)\theta = s(\sigma\theta)$.
We say that $\sigma$ \emph{commutes} with
$\theta$ if $x\sigma\theta = x\theta\sigma$
for all $x \in X$.
We say that $\sigma$ is \emph{more general}
than $\theta$ if $\theta=\sigma\eta$ for some
$\eta \in \subsset$.

Let $s,t \in \termset$.
We say that $s$ \emph{unifies} with $t$
(or $s$ and $t$ unify) if $s\sigma=t\sigma$
for some $\sigma \in \subsset$. Then, $\sigma$
is a \emph{unifier} of $s$ and $t$.
We let $\mgu(s,t)$ denote the set of
\emph{most general unifiers} of $s$ and $t$.
All this is naturally extended
to finite sequences of terms.

\subsection{The Signature Used in the Paper}
\label{sect:sig}
We regard the symbol $\emptyseq$ denoting the
empty sequence as a special constant symbol.
To simplify the statements of this paper,
from now on we fix a signature $\Sigma$
and a set
$H = \{\square_n \mid n \in
\nat \setminus \{0\}\}$
of constant symbols (called \emph{holes})
such that $\Sigma$, $\{\emptyseq\}$ and $H$
are disjoint from each other.
We also fix an infinite countable set
$X$ of variables disjoint from
$\Sigma \cup \{\emptyseq\} \cup H$.
A \emph{term} is an element of $\termset$ and
most of the time a substitution is an element
of $\subsset$.
Let $n$ be a positive integer. An
\emph{$n$-context} is an element of
$\termsett$ that contains occurrences of
$\square_1$, \dots, $\square_n$ but no
occurrence of another hole. For all 
$n$-contexts $c$ and all
$s_1,\dots,s_n \in \termsett$,
we let $c(s_1,\dots,s_n)$ denote the
element of $\termsett$ obtained from $c$
by replacing all the occurrences of
$\square_i$ by $s_i$, for all $1 \leq i \leq n$.
We use the superscript notation for denoting
several successive embeddings of a 1-context
$c$ into itself: $c^0 = \square_1$ and, for
all $n \in \nat$, $c^{n+1} = c(c^n)$.
We denote by $\scontext$ the set of
1-contexts that contain no variable.
Terms are generally denoted by $a,s,t,u,v$,
variables by $x,y,z$ and contexts by $c$,
possibly with subscripts and primes.

\subsection{Logic Programming}
\label{sect:lp}
We refer to~\cite{apt97} for the basics of
logic programming. To simplify our presentation,
we place ourselves in the general framework of
\emph{term reduction systems}, \ie we do not
distinguish \emph{predicate/function} symbols,
\emph{terms/atoms}\dots{} and we do not always
use the standard terminology and notations of
logic programming (\eg \emph{rule} instead of
\emph{clause}).
\begin{definition}
    A \emph{program} is a subset of
    $\termset \times \seqset{\termset}$,
    every element of which is called a \emph{rule}.
    A rule $(u,\seqset{v})$ is \emph{binary}
    if $\seqset{v}$ is empty or is a singleton.
    We let $\binruleset$ denote the set of
    binary rules. For the sake of readability,
    we omit the delimiters $\langle$ and $\rangle$
    in the right-hand side of a binary rule,
    which amounts to considering that 
    $\binruleset \subseteq
    \termset \times (\termset \cup \{\emptyseq\})$.
\end{definition}

Given a rule $(u,\seqset{v})$, we let
$\left[(u,\seqset{v})\right] =
\left\{(u\gamma,\seqset{v}\gamma) \mid \gamma 
\text{ is a renaming}\right\}$ denote its
\emph{equivalence class modulo renaming}.
For all sets of rules $U$, we let 
$[U] = \bigcup_{r \in U} [r]$.
Moreover, for all rules or sequences of
terms $S$, we write
$\seqset{r} \ll_S U$ to denote that
$\seqset{r}$ is a sequence of elements
of $U$ variable disjoint from $S$ and 
from each other.

The rules of a program allow one to rewrite
finite sequences of terms. This is formalised
by the following binary relation, which
corresponds to the operational semantics of
logic programming with the leftmost selection
rule.
\begin{definition}\label{def:rewrite-rel}
    For all programs $P$, we let
    $\rra_P =
    \bigcup \left\{\rra_r \;\middle\vert\;
    r\in P \right\}$
    where, for all $r \in P$,
    \begin{align*}
        \rra_r =& \left\{
            \big(\sequence{s}\seqset{s},
            (\seqset{v}\,\seqset{s})\theta\big)
            \in \seqset{\termset}^2
            \;\middle\vert\;
            \begin{array}{l}
                \sequence{(u, \seqset{v})}
                \ll_{\sequence{s}\seqset{s}} [r], \
                \theta \in \mgu(u, s)
            \end{array}\right\}
    \end{align*}
    For all $s \in \termset$,
    $\calls_P(s) =
    \{t \in \termset \mid 
    \sequence{s} \rra_P^+ \sequence{t,\dots}\}
    \cup \{\emptyseq \mid
    \sequence{s} \rra_P^+ \emptyseq\}$
    is the set of
    \emph{calls in the $\rra_P$-chains that
    start from $s$}.
\end{definition}

\subsection{Binary Unfolding}
\label{sect:binunf}
The binary unfolding of a program $P$
(see the paper by ~\cite{codishT99})
is a set of binary rules, denoted as
$\binunf(P)$, that captures call patterns
of $P$.
It corresponds to the transitive closure
of a binary relation which relates
consecutive calls selected in a computation
(Prop.~\ref{prop:binunf_calls} below).
Non-termination for a specific sequence of
terms implies the existence of a
corresponding infinite chain in this relation
(Thm.~\ref{thm:observing-termination-lp}
below).

More precisely, $\binunf(P)$
is defined as the least fixed point of a
function $T_P^{\beta}$ on the power set of $\binruleset$.
For all $U \subseteq \binruleset$, $T_P^{\beta}(U)$
is constructed by unfolding prefixes of right-hand
sides of rules from $P$ using $U$. Let
$(u,\sequence{v_1,\dots,v_m}) \in P$:
\begin{enumerate}[label=(\roman*)]
    \item \label{T_P_beta_i}
    for each $1 \leq i \leq m$, one
    unfolds $v_1,\dots,v_{i-1}$ with
    $(u_1,\emptyseq),\dots,(u_{i-1},\emptyseq)$
    from $U$ to obtain a corresponding
    instance of $(u,v_i)$,
    \item for each $1 \leq i \leq m$, one
    unfolds $v_1,\dots,v_{i-1}$ with
    $(u_1,\emptyseq),\dots,(u_{i-1},\emptyseq)$
    from $U$ and $v_i$ with
    $(u_i,v)$ from $U$ to obtain
    a corresponding instance of $(u,v)$,
    \item one unfolds $v_1,\dots,v_m$
    with $(u_1,\emptyseq),\dots,(u_m,\emptyseq)$
    from $U$ to obtain a corresponding
    instance of $(u,\emptyseq)$.
\end{enumerate}

This is formally expressed as follows, using the
set $\id$ of \emph{identity binary rules},
which consists of every pair
$(\fsym(x_1,\dots,x_m),\fsym(x_1,\dots,x_m))$
where $\fsym \in \Sigma^{(m)}$
and $x_1,\dots,x_m$ are distinct variables.
The use of $\id$ allows one to cover
case~\ref{T_P_beta_i} above.

\newpage
\begin{definition}\label{def:binunf}
    For all programs $P$ and all
    $U \subseteq \binruleset$, we let
    \[T_P^{\beta}(U) =
    \big[(u,\emptyseq) \in P\big] \cup 
    \left[\left(u\theta, v\theta\right)
    \; \middle| \;
    \begin{array}{l}
        r = (u,\sequence{v_1,\dots,v_m}) \in P,\
        1 \leq i \leq m \\
        \sequence{(u_1,\emptyseq),
        \dots,
        (u_{i-1},\emptyseq), (u_i,v)}
        \ll_r U \cup \id \\
        \text{if $i < m$ then $v \neq \emptyseq$} \\
        \theta \in \mgu\big(\sequence{u_1,\dots,u_i},
        \sequence{v_1,\dots,v_i}\big)
    \end{array}
    \right]\]
    The \emph{binary unfolding of $P$} is the
    set of binary rules
    $\binunf(P) = \big(T_P^{\beta}\big)^*(\emptyset)$.
\end{definition}

Intuitively, each $(u,v) \in \binunf(P)$
specifies that some instance of $v$ belongs
to $\calls_P(u)$. More generally, we have:
\begin{proposition}\label{prop:binunf_calls}
    Let $P$ be a program,
    $(u,v) \in \binunf(P)$ and
    $\sigma \in \subsset$.
    Then, for some $\theta \in \subsset$,
    we have $v\theta \in \calls_P(u\sigma)$.
\end{proposition}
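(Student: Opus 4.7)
The plan is induction on the least $n \in \nat$ for which $(u,v) \in (T_P^\beta)^n(\emptyset)$; this is well-defined because $\binunf(P)$ is the union of the iterates of the monotone operator $T_P^\beta$ starting from $\emptyset$. The base case $n = 0$ is vacuous. For the inductive step, I set $U = (T_P^\beta)^n(\emptyset)$, assume the claim for every pair in $U$, and examine the two alternatives in the definition of $T_P^\beta(U)$.

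If $(u,v) \in \big[(u',\emptyseq) \in P\big]$, then $v = \emptyseq$ and $(u,\emptyseq)$ is a renaming of a rule of $P$; a further renaming variable-disjoint from $u\sigma$ yields, by Definition~\ref{def:rewrite-rel}, $\sequence{u\sigma} \rra_P \emptyseq$, so that $\emptyseq \in \calls_P(u\sigma)$, and taking $\theta = \idsub$ gives $v\theta = \emptyseq \in \calls_P(u\sigma)$.

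Otherwise $(u,v)$ is a renaming of $(u_0\pi, v'\pi)$ obtained from some $r = (u_0,\sequence{v_1,\dots,v_m}) \in P$, an index $1 \leq i \leq m$, binary rules $(u_1,\emptyseq),\dots,(u_{i-1},\emptyseq) \in U$ and $(u_i,v') \in U \cup \id$ (mutually variable-disjoint and disjoint from $r$), together with $\pi \in \mgu(\sequence{u_1,\dots,u_i},\sequence{v_1,\dots,v_i})$. By further renaming I assume the variables of $r$, of the binary rules and of $\pi$ are all disjoint from $\vars(\sigma)$ and that $(u,v) = (u_0\pi, v'\pi)$. The plan is to mimic the $i$ unfoldings producing $(u,v)$ by a $\rra_P$-chain out of $\sequence{u\sigma}$: first apply a fresh renaming of $r$ to rewrite $\sequence{u\sigma} = \sequence{u_0\pi\sigma}$ in one $\rra_P$-step to $\sequence{v_1,\dots,v_m}\mu_0$; then, for $j = 1,\dots,i-1$, invoke the induction hypothesis on $(u_j,\emptyseq) \in U$ to extend the chain through steps that consume the leading term $v_j\mu_{j-1}$ entirely, producing an updated substitution $\mu_j$; finally, if $(u_i,v') \in U$ then invoke the induction hypothesis once more to continue into a sequence whose leading term is an instance of $v'$, while if $(u_i,v') \in \id$ then $v' = u_i$ and no extra step is needed, since the initial application of $r$ already supplied the strict prefix required for the resulting chain to lie in $\rra_P^+$.

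The main obstacle is the substitution bookkeeping: one must verify that the mgus $\mu_0,\mu_1,\dots$ produced piecewise along the constructed chain can be combined into a substitution whose action on $\vars(v'\pi) = \vars(v)$ factors as $\pi$ followed by an extension, so that the final leading term equals $v\theta$ for some extractable $\theta$. The argument hinges on $\pi$ being already an mgu of the bundled equation $\sequence{u_1,\dots,u_i} = \sequence{v_1,\dots,v_i}$: each stepwise mgu along the chain is therefore more general than the corresponding restriction of $\pi$, and the composition of all of them, restricted to $\vars(v)$, supplies the required $\theta$. Such unifier-composition arguments are standard in treatments of SLD-resolution (see~\cite{apt97}).
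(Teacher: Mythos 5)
Your proposal takes a genuinely different route from the paper, and that route is left with a gap at exactly the point where all the technical weight sits. The paper does not re-derive anything about concrete $\rra_P$-derivations: it cites Prop.~4.2 of \cite{codishT99}, which characterises $\calls_P(u\sigma)$ exactly as $\{v'\rho \mid (u',v') \in \binunf(P),\ \rho \in \mgu(u\sigma,u')\}$, and then the only thing left to check is that a variant $u\gamma$ of $u$ (disjoint from $u\sigma$) unifies with $u\sigma$ --- which is easy because $u\sigma$ is an instance of $u$ --- so that $\theta = \gamma\rho$ works. You instead attempt to reprove the soundness half of that cited result by induction on the least $n$ with $(u,v) \in (T_P^\beta)^n(\emptyset)$. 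That is a legitimate strategy (it is essentially how Codish and Taboch prove their theorem), but as written it is not a proof.

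Concretely, two lemmas are asserted rather than established, and they are the entire content of the inductive step. First, a persistence (lifting) lemma: the induction hypothesis on $(u_j,\emptyseq)$ only gives $\sequence{u_j\sigma'} \rra_P^+ \emptyseq$ for each instance $u_j\sigma'$, but what you need is that $\sequence{u_j\sigma'}\seqset{t} \rra_P^+ \seqset{t}\delta_j$ for some $\delta_j$, i.e.\ that the derivation consuming the leading atom can be run with a non-empty tail that gets carried along and further instantiated. This is standard for LD-derivations but must be stated and proved; Definition~\ref{def:rewrite-rel} does not give it for free. Second, the factorisation of the composed answer substitution through $\pi$: you need the composition $\mu_0\delta_1\cdots$ restricted to $\vars(r)$ to be an \emph{instance} of $\pi$ (because $\pi$ is a \emph{most general} unifier of the bundled equations, any other unifier --- in particular the one computed incrementally along the concrete derivation, further instantiated by $\sigma$ --- factors as $\pi$ followed by some $\delta$, whence $v'(\mu_0\delta_1\cdots) = v'\pi\delta = v\delta$ and $\theta = \delta$). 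Your sketch states this backwards (``each stepwise mgu \ldots is more general than the corresponding restriction of $\pi$''), and getting from ``the leading term is an instance of $v'$'' to ``the leading term is an instance of $v = v'\pi$'' is precisely what this lemma buys; without it the conclusion $v\theta \in \calls_P(u\sigma)$ does not follow. Either carry out these two lemmas (a switching/composition-of-mgus argument in the style of \cite{apt97}) or, as the paper does, cite the Codish--Taboch characterisation and reduce the proposition to the one-line unification check.
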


\begin{example}\label{ex:binunf}
    Let $P$ be the program which consists of the rules
    \[{\renewcommand{\arraystretch}{1.1}
    \begin{array}{rcl@{\qquad}rcl}
        r_1 & = & \multicolumn{4}{l}{\left(\while(x, y),
        \sequence{\gt(x, y), \add(x, y, z),
        \while(z, \ssym(y))}\right)} \\
        r_2 & = & \left(\gt(\ssym(x), \zero), \emptyseq\right) &
        r_3 & = & \left(\gt(\ssym(x), \ssym(y)), \gt(x, y)\right) \\
        r_4 & = & \left(\add(x, \zero, x),\emptyseq\right) &
        r_5 & = & \left(\add(x, \ssym(y), \ssym(z)),
        \add(x, y, z)\right) \\
        r_6 & = & \left(\while(x, y), \mle(x, y)\right) & & & \\
        r_7 & = & \left(\mle(\zero, x), \emptyseq\right) &
        r_8 & = & \left(\mle(\ssym(x), \ssym(y)), \mle(x, y)\right) \\
    \end{array}}\]
    and which corresponds to the imperative program
    fragment
    \begin{center}
        \verb-while (x > y) { x = x + y; y = y + 1; }-
    \end{center}
    Rule $r_1$ is used to continue the loop and
    $r_6$ is used to stop it.
    Note that this imperative fragment does not
    terminate if it is run from integers $x,y$ such that
    $x > y > 0$.
    
    Let us compute some elements of $\binunf(P)$
    by applying Def.~\ref{def:binunf}.
    \begin{itemize}
        \item Obviously, we have
        $[r_2] \cup [r_4] \subseteq
        T_P^{\beta}(\emptyset)$.
        \item Let us unfold the whole right-hand
        side of $r_1$, \ie let us consider
        $i = m = 3$. We have
        \begin{align*}
            \big\langle
            \left(\gt(\ssym(x_1), \zero),
            \emptyseq\right),
            & \left(\add(x_2, \zero, x_2),
            \emptyseq\right), \\
            & \left(\while(x_3, y_3),
            \while(x_3, y_3)\right) \big\rangle
            \ll_{r_1} [r_2] \cup [r_4] \cup \id
        \end{align*}
        and
        $\theta = \{x \mapsto \ssym(x_1),
        y \mapsto \zero,
        z \mapsto \ssym(x_1),
        x_2 \mapsto \ssym(x_1),
        x_3 \mapsto \ssym(x_1),
        y_3 \mapsto \ssym(\zero)\}$
        is the mgu of 
        $\sequence{\gt(\ssym(x_1), \zero),
        \add(x_2, \zero, x_2),
        \while(x_3, y_3)}$ and
        $\sequence{\gt(x, y), \add(x, y, z),
        \while(z, \ssym(y))}$.
        Consequently, the set $(T_P^{\beta})^2(\emptyset)$
        contains the binary rule
        $\left(\while(x, y)\theta,
        \while(x_3, y_3)\theta\right) =
        \left(\while\left(\ssym(x_1), \zero\right),
        \while\left(\ssym(x_1), \ssym(\zero)\right)\right)$.
        \item More generally, we have
        $[r'_n \mid n \in \nat] \subseteq \binunf(P)$
        where, for all $n \in \nat$,
        $r'_n =
        \left(\while(\ssym^{n+1}(x),\ssym^n(\zero)),
        \while(\ssym^{2n+1}(x),\ssym^{n+1}(\zero))
        \right)$.
    \end{itemize}
\end{example}

The binary unfolding exhibits the termination
properties of a program:
\begin{theorem}[see the paper by~\cite{codishT99}]
    \label{thm:observing-termination-lp}
    Let $P$ be a program and $\seqset{s}$
    be a sequence of terms.
    Then, there is an infinite $\rra_P$-chain
    that starts from $\seqset{s}$ 
    \emph{if and only if} 
    there is an infinite
    $\rra_{\binunf(P)}$-chain
    that starts from $\seqset{s}$.
\end{theorem}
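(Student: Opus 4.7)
I would prove the two directions separately; the backward direction is essentially a direct simulation using Proposition \ref{prop:binunf_calls}, while the forward direction is the substantive part and requires a dual ``completeness'' statement that I would establish by induction.

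For $(\Leftarrow)$: given an infinite $\rra_{\binunf(P)}$-chain $\seqset{s} = Q_0 \rra_{\binunf(P)} Q_1 \rra_{\binunf(P)} \cdots$, I would simulate each binary step by a non-empty $\rra_P^+$-chain. If the $i$-th step uses $(u_i, v_i) \in \binunf(P)$ on $Q_i = \sequence{s_i}\seqset{t_i}$ with $\theta_i \in \mgu(u_i, s_i)$, then Proposition \ref{prop:binunf_calls} applied with $\sigma = \theta_i$ yields some $\theta'_i$ with $v_i\theta'_i \in \calls_P(s_i\theta_i)$; the witnessing $\rra_P^+$-derivation (with the passive suffix $\seqset{t_i}\theta_i$ tagging along) supplies the simulation. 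Splicing successive simulations, using variable-disjoint renamings of each binary rule, produces an infinite $\rra_P$-chain from $\seqset{s}$.

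For $(\Rightarrow)$: given an infinite $\rra_P$-chain from $\seqset{s}$, I would build the $\rra_{\binunf(P)}$-chain by peeling one binary step at a time. At each stage, consider the leftmost atom $s_1$ of the current sequence $\sequence{s_1, \dots, s_n}$. Either $s_1$ alone admits an infinite $\rra_P$-chain, in which case some finite prefix $\sequence{s_1} \rra_P^+ \sequence{s'_1, \dots}$ occurs with $\sequence{s'_1}$ still admitting an infinite chain, and the call $s_1 \to s'_1$ is captured by some $(u, v) \in \binunf(P)$, yielding one $\rra_{\binunf(P)}$-step. Otherwise $s_1$ is fully resolved in finitely many steps with accumulated substitution $\sigma$, and the residual infinite chain continues from $\sequence{s_2, \dots, s_n}\sigma$; this success derivation is captured by some $(u, \emptyseq) \in \binunf(P)$, yielding one $\rra_{\binunf(P)}$-step to $\sequence{s_2, \dots, s_n}\sigma$. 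In both cases we produce a binary step and remain in a state with an infinite $\rra_P$-chain, so iteration yields the desired infinite $\rra_{\binunf(P)}$-chain.

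The main obstacle is the completeness statement tacitly used in the peeling: every finite $\rra_P$-derivation witnessing either a first leftmost call $s_1 \to s'_1$ or a full success $\sequence{s_1} \rra_P^* \emptyseq$ must be captured by some rule of $\binunf(P)$ (of which $s_1$ and $s'_1$ are suitable instances). I would prove this by induction on the length of the derivation, mirroring the three-case construction of $T_P^\beta$ in Definition \ref{def:binunf}: the identity rules in $\id$ handle body atoms not yet touched, case (ii) handles a call straddling the boundary between a fully-resolved prefix of a rule's right-hand side and its not-yet-fully-resolved body remainder, and case (iii) handles fully successful right-hand sides. This is the technical heart of the Codish--Taboch result; once it is in place, the iteration argument is routine bookkeeping.
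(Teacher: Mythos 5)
There is nothing in the paper to compare your proof against: Thm.~\ref{thm:observing-termination-lp} is imported from \cite{codishT99} and the appendix deliberately gives no proof of it (it only proves Prop.~\ref{prop:binunf_calls} \emph{using} results of that paper, namely its Prop.~4.2 characterising $\calls_P$). So the only question is whether your reconstruction of the Codish--Taboch argument stands on its own.

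As an outline it is the right argument: the two directions, the simulation of binary steps by $\rra_P^+$-segments, and the ``peeling'' of the leftmost atom driven by the dichotomy (under the leftmost selection rule) between an infinite local subcomputation of $s_1$ and a finite successful one, are all as in the standard proof, and you correctly locate the technical heart in the completeness of $\binunf(P)$ with respect to first calls and successes. But two deferrals hide real work. First, in $(\Leftarrow)$ the splicing does not follow from Prop.~\ref{prop:binunf_calls} as stated: that proposition only asserts the existence of \emph{some} $\theta'_i$ with $v_i\theta'_i \in \calls_P(s_i\theta_i)$, whereas to concatenate the segment ending in $\sequence{v_i\theta'_i,\dots}$ with the simulation of the next binary step (which starts from the actual atom $v_i\theta_i$ and its unifier with $u_{i+1}$) you need the produced call to be more general than, or liftable to, that atom. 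What is actually needed is the equational characterisation $\calls_P(u\sigma) = \{v'\rho \mid (u',v')\in\binunf(P),\ \rho\in\mgu(u\sigma,u')\}$ together with a lifting lemma for $\rra_P$-derivations; without saying this, ``routine bookkeeping'' is an overstatement. Second, the completeness induction for $(\Rightarrow)$ is named but not carried out, and it is the only genuinely hard part of the theorem. So what you have is a correct and well-organised proof \emph{plan}, not yet a proof; to make it self-contained you would need to state and prove the lifting lemma and the completeness lemma, or else do what the paper does and cite \cite{codishT99} for the whole statement.
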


\begin{example}[Ex.~\ref{ex:binunf} cont.]
    \label{ex:binunf2}
    For all $n > m > 0$, we have the infinite
    $\rra_P$-chain
    \begin{align*}
        \sequence{\while(\ssym^n(\zero), \ssym^m(\zero))}
        & \mathop{\big(\rra_{r_1} \circ \rra_{r_3}^m \circ \rra_{r_2}
        \circ \rra_{r_5}^m \circ \rra_{r_4}\big)}
        \sequence{\while(\ssym^{n+m}(0), \ssym^{m+1}(0))} \\
        & \mathop{\big(\rra_{r_1} \circ \rra_{r_3}^{m+1} \circ \rra_{r_2}
        \circ \rra_{r_5}^{m+1} \circ \rra_{r_4}\big)}
        \cdots
    \end{align*}
    and also the infinite $\rra_{\binunf(P)}$-chain
    \[\sequence{\while(\ssym^n(\zero), \ssym^m(\zero))}
    \mathop{\rra_{r'_m}}
    \sequence{\while(\ssym^{n+m}(0), \ssym^{m+1}(0))}
    \mathop{\rra_{r'_{m+1}}} \cdots\]
    We note that none of these chains embeds a loop:
    in the $\rra_P$-chain, the number of applications
    of $r_3$ and $r_5$ gradually increases and,
    in the $\rra_{\binunf(P)}$-chain, a new binary
    rule (not occurring before) is used at each step.
\end{example}

\section{Patterns}
\label{sect:patterns}

In this section, we describe our adaptation 
to logic programming of the pattern approach
introduced by~\cite{emmesEG12}.
Our main idea is similar to that
of~\cite{payetM06}, \ie unfold the program
and try to prove its non-termination from
the resulting set. To this end, based on the
binary unfolding mentioned previously
(Def.~\ref{def:binunf}), we introduce
a new unfolding technique that produces patterns
of rules (Def.~\ref{def:patunf}) and a
sufficient condition to non-termination that
we apply to the generated patterns
(Thm.~\ref{theo:detection-nonterm}).

First, we recall the definition of pattern
term and pattern rule, that we formulate
differently from~\cite{emmesEG12} to fit
our needs. In particular, we introduce the
concept of pattern substitution.
\begin{definition}\label{def:pattern-subs}
    A \emph{pattern substitution} is a pair
    $\theta = (\sigma,\mu) \in \subsset^2$,
    rather denoted as $\patsub{\sigma}{\mu}$.
    %
    For all $n \in \nat$, we let $\theta(n) = \sigma^n\mu$.
    We say that $\theta$ \emph{describes} the set 
    $\left\{\theta(n) \;\middle\vert\;
    n \in \nat\right\} \subseteq \subsset$.
\end{definition}
For instance, if $\sigma = \{x \mapsto \ssym(x),
y \mapsto \ssym(y)\}$ and
$\mu = \{x \mapsto \ssym(x), y\mapsto \zero\}$
then $\theta = \patsub{\sigma}{\mu}$ is a pattern
substitution. For all $n \in \nat$, we have
$\theta(n) = \sigma^n\mu =
\{x\mapsto \ssym^{n+1}(x), y\mapsto \ssym^n(\zero)\}$.

From pattern substitutions, we define pattern terms.
\begin{definition}\label{def:pattern-term}
    A \emph{pattern term} is a pair $p = (s,\theta)$
    where $s \in \termset$ and $\theta$ is a pattern
    substitution. We denote it as $\pattermm{s}{\theta}$
    or $\patterm{s}{\sigma}{\mu}$
    if $\theta = \patsub{\sigma}{\mu}$.
    For all $n \in \nat$, we let
    $p(n) = s\theta(n)$.
    We say that $p$ \emph{describes} the set
    $\left\{p(n) \;\middle\vert\;
    n \in \nat\right\} \subseteq \termset$.
    For all $s \in \termset$, we let
    $\pt{s} = \patterm{s}{\idsub}{\idsub}$.
\end{definition}
For instance, $p = \patterm{\gt(x,y)}
{\{x \mapsto \ssym(x), y \mapsto \ssym(y)\}}
{\{x \mapsto \ssym(x), y\mapsto \zero\}}$
is a pattern term. For all $n \in \nat$, we
have
$p(n) = \gt(\ssym^{n+1}(x),\ssym^n(\zero))$.

Then, from pattern terms one can define pattern
rules.
\begin{definition}\label{def:pattern-rule}
    A \emph{pattern rule} is a pair $r = (p,q)$
    of pattern terms.
    It describes the set of binary rules
    $\rules(r) = \left\{(p(n),q(n))
    \;\middle\vert\; n \in \nat \right\}$.
    We let $\patruleset$ denote the set
    of pattern rules.
\end{definition}

\newpage
\begin{example}[Ex.~\ref{ex:binunf} cont.]
    \label{ex:running}
    Let $u = \while(x,y)$ be the left-hand
    side of $r_1$,
    $\sigma = \{x\mapsto \ssym(x),
    y\mapsto \ssym(y)\}$,
    $\sigma' = \{x\mapsto \ssym(x)\}$
    and
    $\mu = \{x\mapsto \ssym(x),
    y\mapsto \zero\}$.
    The pattern terms
    \begin{align*}
        p &= \patterm{u\sigma}{\sigma}{\mu} =
        \patterm{\while(\ssym(x),\ssym(y))}{\sigma}{\mu} \\
        q &= \patterm{u\sigma^2}{\sigma\sigma'}{\mu}
        = \patterm{\while(\ssym^2(x),\ssym^2(y))}
        {\{x\mapsto \ssym^2(x), y\mapsto \ssym(y)\}}
        {\mu}
    \end{align*}
    respectively describe the sets of terms
    $\left\{p(n) =
    \while\left(\ssym^{n+2}(x),\ssym^{n+1}(\zero)\right)
    \;\middle\vert\;
    n \in \nat\right\}$ and
    $\left\{q(n) =
    \while\left(\ssym^{2n+3}(x),\ssym^{n+2}(\zero)\right)
    \;\middle\vert\;
    n \in \nat\right\}$.
    Moreover,
    \begin{align*}
        \rules((p,q)) &= 
        \left\{\left(\while(\ssym^{n+2}(x),\ssym^{n+1}(\zero)),
        \while(\ssym^{2n+3}(x),\ssym^{n+2}(\zero))\right)
        \;\middle\vert\; n \in \nat\right\} \\
        &= \left\{r'_n \mid n > 0\right\}
        \subseteq \left\{r'_n \mid n \in \nat\right\}
        \subseteq \binunf(P)
        \text{ (see Ex.~\ref{ex:binunf})}
    \end{align*}
\end{example}

The notion of \emph{correctness} of a pattern
rule is defined by~\cite{emmesEG12} in
the context of term rewriting. We reformulate
it as follows in logic programming.

\begin{definition}\label{def:correct}
    Let $P$ be a program.
    A pattern rule $r$ is \emph{correct}
    \wrt{} $P$ if
    $\rules(r) \subseteq \binunf(P)$.
    A set $U$ of pattern rules is correct
    \wrt{} $P$ if all its elements are.
\end{definition}

So, if a pattern rule $(p,q)$ is correct
\wrt{} $P$ then, for all $n \in \nat$, we
have $(p(n),q(n)) \in \binunf(P)$, \ie by
Prop.~\ref{prop:binunf_calls}, 
$\sequence{p(n)} \rra^+_P \sequence{q(n)\theta,\dots}$
for some $\theta \in \subsset$.
Intuitively, this means that for all $n \in \nat$,
a call to $p(n)$ necessarily leads to a call to
$q(n)$.
For instance, in Ex.~\ref{ex:running}, we have
$\rules((p,q)) \subseteq \binunf(P)$, hence
$(p,q)$ is correct \wrt{} $P$ and we have
$\sequence{\while(\ssym^{n+2}(x),\ssym^{n+1}(\zero))}
\rra_P^+
\sequence{\while(\ssym^{2n+3}(x),\ssym^{n+2}(\zero))}$
for all $n \in \nat$.

The next result allows one to infer correct 
pattern rules from a program. It considers
pairs of rules that have the same form as
$(r_2,r_3)$, $(r_4,r_5)$ and $(r_7,r_8)$
in Ex.~\ref{ex:binunf}.
It uses the set of contexts $\scontext$
(see Sect.~\ref{sect:sig}).

\begin{proposition}
    \label{prop:correct-set}
    Suppose that a program $P$ contains
    two binary rules $r = (u,v)$ and 
    $r' = (u', \emptyseq)$ such that
    \begin{itemize}
        \item $u = c(c_1(x_1),\dots,c_m(x_m))$,
        $v = c(x_1,\dots,x_m)$ and
        $u' = c(t_1,\dots,t_m)$,
        \item $\{c_1,\dots,c_m\} \subseteq \scontext$
        and $c$ is an $m$-context with
        $\vars(c) = \emptyset$,
        \item $x_1,\dots,x_m$ are distinct
        variables and $t_1,\dots,t_m$ are terms.
    \end{itemize}
    Then, $(p,\qe)$ and $(q,\pt{v})$ are correct
    \wrt{} $P$ where
    $p = \patterm{v}{\sigma}{\mu}$,
    $q = \patterm{u}{\sigma}{\idsub}$ and
    \begin{align*}
        \sigma &= \{x_k \mapsto c_k(x_k)
        \mid 1 \leq k \leq m,\
        c_k(x_k) \neq x_k \} \\
        \mu &=
        \{x_k \mapsto t_k \mid 1 \leq k \leq m,\
        t_k \neq x_k\}
    \end{align*}
\end{proposition}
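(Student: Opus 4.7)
My plan is to reduce both claims to the same inductive pattern by first unpacking the definitions. Because every $c_k$ is variable-free and the $x_k$ are distinct, one checks that $\sigma^n(x_k) = c_k^n(x_k)$ holds uniformly, regardless of whether $c_k(x_k) = x_k$ (\ie whether or not $x_k \in \dom(\sigma)$). Consequently $q(n) = u\sigma^n = c(c_1^{n+1}(x_1),\dots,c_m^{n+1}(x_m))$ and $p(n) = v\sigma^n\mu = c(c_1^n(t_1),\dots,c_m^n(t_m))$, while $\pt{v}(n) = v$ and $\qe(n) = \emptyseq$ for every $n$. The task therefore reduces to showing $(u\sigma^n, v) \in \binunf(P)$ and $(v\sigma^n\mu, \emptyseq) \in \binunf(P)$ for every $n \in \nat$, which I would prove by induction on $n$, driving the unfolding each time with the single rule $r \in P$.

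For $(q, \pt{v})$: the base case $n=0$ amounts to $(u,v) \in \binunf(P)$. I would obtain it by applying $T_P^{\beta}$ to $r = (u,v) \in P$ (with $m=1$, $i=1$) using a fresh $\id$-variant $(\fsym(y_1,\dots,y_k), \fsym(y_1,\dots,y_k))$ whose head symbol $\fsym$ is the root of $v$; the mgu instantiates the $y_j$s to the immediate subterms of $v$, leaving $u$ unchanged and producing $v$ on the right. For the inductive step, assuming $(u\sigma^n, v) \in \binunf(P)$, I would unfold $v$ in $r$ using a fresh variant $(c(c_1^{n+1}(y_1),\dots,c_m^{n+1}(y_m)), c(y_1,\dots,y_m))$ of that pair. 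The substitution $\theta = \{x_k \mapsto c_k^{n+1}(y_k) \mid 1 \leq k \leq m\}$ is an mgu---immediate, since the $x_k$ are distinct and the $c_k$ are variable-free---and it yields $u\theta = c(c_1^{n+2}(y_1),\dots,c_m^{n+2}(y_m))$ together with $w\theta = c(y_1,\dots,y_m)$, which up to renaming is $(u\sigma^{n+1}, v)$.

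For $(p, \qe)$ the strategy is symmetric, with $r'$ driving the base case: $(v\mu, \emptyseq) = (c(t_1,\dots,t_m), \emptyseq) = r' \in P$ lies in $\binunf(P)$ through the $[(u,\emptyseq) \in P]$ summand of $T_P^{\beta}$. For the step, from $(v\sigma^n\mu, \emptyseq) \in \binunf(P)$ I would take a fresh variant $(c(c_1^n(t_1'),\dots,c_m^n(t_m')), \emptyseq)$ and use it to unfold $v$ in $r$; the mgu $\theta = \{x_k \mapsto c_k^n(t_k') \mid 1 \leq k \leq m\}$ produces $u\theta = c(c_1^{n+1}(t_1'),\dots,c_m^{n+1}(t_m'))$, a variant of $v\sigma^{n+1}\mu$, paired with $\emptyseq$. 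In both induction steps I would appeal to the fact that $\binunf(P)$ is closed under renaming, which is built into Def.~\ref{def:binunf} through the equivalence-class brackets.

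The main obstacle is purely bookkeeping: organising the fresh renamings so that the variable-disjointness conditions of the definition of $T_P^{\beta}$ hold at each unfolding step, and treating the degenerate case $c_k = \square_1$ uniformly with the genuine-context case. The latter I would handle once and for all by the observation $\sigma^n(x_k) = c_k^n(x_k)$ recorded above, which makes the two cases indistinguishable in the mgu computation.
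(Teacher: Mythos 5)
Your proof is correct and takes essentially the same route as the paper's: two separate inductions on $n$, with the base cases given by $r'$ (via the $[(u,\emptyseq) \in P]$ summand of $T_P^{\beta}$) and by unfolding $r$ with $\id$, and with inductive steps that unfold the right-hand side of $r$ using a renamed variant of the previously derived binary rule, computing the same mgu $\{x_k \mapsto c_k^n(t_k\gamma)\}$ resp.\ $\{x_k \mapsto c_k^{n+1}(x_k\gamma)\}$. The paper handles the renaming bookkeeping exactly as you sketch, via an explicit renaming $\gamma$ and closure of $\binunf(P)$ under renaming.
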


\begin{example}\label{ex:correct}
    Let us regard $(r_2,r_3)$ and
    $(r_4,r_5)$ from Ex.~\ref{ex:binunf}.
    \begin{itemize}
        \item 
        $r_2 = (c(t_1,t_2),\emptyseq)$ and
        $r_3 = \big(c(c_1(x),c_1(y)), c(x,y)\big)$
        for 
        $t_1 = \ssym(x)$, $t_2 = \zero$,
        $c = \gt(\square_1,\square_2)$
        and $c_1 = \ssym(\square_1)$.
        So by Prop.~\ref{prop:correct-set},
        $(p_1,\qe)$ and
        $(q_1,\pt{\gt(x,y)})$ are correct \wrt{} $P$
        where
        \begin{align*}
            p_1 &= \patterm{\gt(x,y)}
            {\{x\mapsto \ssym(x),y\mapsto \ssym(y)\}}
            {\{x\mapsto\ssym(x),y\mapsto \zero\}} \\
            q_1 &= \patterm{\gt(\ssym(x),\ssym(y))}
            {\{x\mapsto \ssym(x),y\mapsto \ssym(y)\}}
            {\idsub}
        \end{align*}
        Let $n \in \nat$. Then, we have
        $(p_1(n), \qe(n)) \in \binunf(P)$.
        Hence, by Prop.~\ref{prop:binunf_calls},
        $\qe(n)\eta \in \calls_P(p_1(n))$
        for some $\eta \in \subsset$.
        But $\qe(n)\eta = \emptyseq\eta = \emptyseq$
        so by Def.~\ref{def:rewrite-rel}
        ($\calls_P$) we have
        $\sequence{p_1(n)} \rra_P ^+ \emptyseq$
        where $p_1(n) = \gt(\ssym^{n+1}(x),
        \ssym^n(\zero))$.
        \item 
        $r_4 = (c'(t'_1,t'_2,t'_1),\emptyseq)$
        and
        $r_5 = (c'(c'_1(x),c'_2(y),c'_2(z)),c'(x,y,z))$
        for 
        $t'_1 = x$, $t'_2 = \zero$,
        $c' = \add(\square_1,\square_2,\square_3)$,
        $c'_1 = \square_1$ and $c'_2 = \ssym(\square_1)$.
        So, by Prop.~\ref{prop:correct-set},
        the pattern rule $(p_2,\qe)$ is correct \wrt{} $P$
        where
        $p_2 = \patterm{\add(x,y,z)}
        {\{y\mapsto \ssym(y),z\mapsto \ssym(z)\}}
        {\{y\mapsto\zero,z\mapsto x\}}$.
    \end{itemize}
\end{example}

Unification for pattern terms is not considered
by~\cite{emmesEG12}. As we need it in our development
(see Def.~\ref{def:patunf} below), we define it here.
\begin{definition}\label{def:pattern-unif}
    Let $p$ and $q$ be pattern terms and $\theta$
    be a pattern substitution.
    Then, $\theta$ is a \emph{unifier} of $p$
    and $q$ if, for all $n \in \nat$, we have
    $p(n)\theta(n) = q(n)\theta(n)$.
    Moreover, $\theta$ is a \emph{most general unifier}
    (mgu) of $p$ and $q$ if, for all $n \in \nat$,
    $\theta(n) \in \mgu(p(n),q(n))$.
    We let $\mgu(p,q)$ denote the set of all mgu's
    of $p$ and $q$.
\end{definition}
\uppereg
if $p = \patterm{\fsym(x,y)}
{\{x \mapsto \ssym(x)\}}{\{x \mapsto \zero\}}$
and $q = \patterm{\fsym(x,y)}
{\{y \mapsto \ssym(y)\}}{\{y \mapsto \one\}}$,
then $\theta = \patsub{\{x \mapsto \ssym(x),
y \mapsto \ssym(y)\}}
{\{x \mapsto \zero,y \mapsto \one\}}$
is a unifier of $p$ and $q$. Indeed, for all
$n \in \nat$, 
$\theta(n) = \{x \mapsto \ssym^n(\zero),
y \mapsto \ssym^n(\one)\}$ is a unifier of
$p(n) = \fsym(\ssym^n(\zero),y)$ and
$p(n) = \fsym(x,\ssym^n(\one))$.
All these notions are naturally extended to
finite sequences of pattern terms.

We also need to adapt the notion of
equivalence class modulo renaming
(see Sect.~\ref{sect:lp}).
%
\begin{definition}\label{def:pat_equiv_class}
    For all $U \subseteq \patruleset$, we let
    $[U] = \bigcup_{r \in U} [r]$ where
    $[r] =
    \big\{r' \in \patruleset \;\big\vert\;
    rules(r') \subseteq [\rules(r)]
    \big\}$.
\end{definition}
Hence, $[r]$ consists of all pattern rules $r'$
that describe a subset of $[\rules(r)]$.
For instance, if $p = \patterm{\fsym(x,y)}
{\{x \mapsto \ssym(x)\}}{\{x \mapsto \zero\}}$
and $p' = \patterm{\fsym(\ssym(x),y')}
{\{x \mapsto \ssym(x)\}}{\{x \mapsto \zero\}}$,
then $(p',\qe) \in [(p,\qe)]$. Indeed, we have
$\rules((p',\qe)) =
\{(\fsym(\ssym^{n+1}(\zero),y'),\emptyseq)
\mid n \in \nat\} \subseteq
[(\fsym(\ssym^n(\zero),y),\emptyseq)
\mid n \in \nat]
= [\rules((p,\qe))]$.

Now we provide a counterpart of
Def.~\ref{def:binunf} (binary unfolding)
for pattern rules. A notable difference,
however, is the use of an arbitrary set
$B$ instead of 
$E = \left\{\left(\pt{u},\qe\right)
\;\middle\vert\; (u,\emptyseq) \in P
\right\}$.
The set $B$ plays a similar role to the 
pattern creation inference rules
of~\cite{emmesEG12}.
Using suitable sets $B$'s (as those consisting
of rules provided by Prop.~\ref{prop:correct-set}),
we get an approach that computes pattern
rules finitely describing infinite subsets of
$\binunf(P)$, \ie an approach that unfolds
``faster'' (see Ex.~\ref{ex:running1}).
We let $\patid$ denote the set of all pairs
$(\pt{\fsym(x_1,\dots,x_m)},\pt{\fsym(x_1,\dots,x_m)})$
where $\fsym \in \Sigma^{(m)}$ and
$x_1,\dots,x_m$ are distinct variables.
For all rules $r$, the notation $\ll_r$ is
naturally extended to sets of pattern rules, according
to the following definitions: the set of variables
of a pattern term $p = \patterm{s}{\sigma}{\mu}$ is 
$\vars(p) = \vars(s) \cup \vars(\sigma) \cup \vars(\mu)$
and that of a pattern rule
$r = (p,q)$ is $\vars(r) = \vars(p) \cup \vars(q)$.

\begin{definition}\label{def:patunf}
    For all programs $P$ and all
    $B,U \subseteq \patruleset$, we let
    \[T_{P,B}^{\pi}(U) = [B] \cup
    \left[\begin{array}{l}
        (\patterm{u}{\sigma}{\mu}, \\
        \ \patterm{v}{\sigma_i\sigma}{\mu_i\mu} )
    \end{array}
    \; \middle| \;
    \begin{array}{l}
        r = (u,\sequence{v_1,\dots,v_m}) \in P,\
        1 \leq i \leq m \\
        \sequence{(p_1,\qe),
        \dots, (p_{i-1},\qe),
        (p_i,\patterm{v}{\sigma_i}{\mu_i})} \\
        \quad\qquad \ll_r U \cup \patid \\
        \text{if $i < m$ then $v \neq \emptyseq$} \\
        \patsub{\sigma}{\mu} \in
        \mgu\left(\sequence{p_1,\dots,p_i}, 
        \sequence{\pt{v_1},\dots,\pt{v_i}}\right)\\
        \text{$\sigma$ commutes with $\sigma_i$
        and $\mu_i$}
    \end{array}\right]\]
    The \emph{pattern unfolding of $P$ using $B$}
    is the set 
    $\patunf(P,B) =
    \big(T_{P,B}^{\pi}\big)^*(\emptyset)$.
\end{definition}

\begin{example}\label{ex:running1}
    Let $B$ be the set consisting of the
    rules $(p_1,\qe)$ and $(p_2,\qe)$
    of Ex.~\ref{ex:correct}.
    Let us compute some elements of
    $\patunf(P,B)$ by applying
    Def.~\ref{def:patunf}.
    We have $\{(p'_1,\qe), (p'_2,\qe)\}
    \subseteq [B] \subseteq
    T_{P,B}^{\pi}(\emptyset)$ where $p'_1$
    and $p'_2$ are renamed versions of $p_1$
    and $p_2$ respectively:
    \begin{align*}
        p'_1 &= \patterm{\gt(x_1,y_1)}
        {\left\{x_1 \mapsto \ssym(x_1),
        y_1 \mapsto \ssym(y_1)\right\}}
        {\left\{x_1 \mapsto \ssym(x_1),
        y_1 \mapsto \zero\right\}} \\
        p'_2 &= \patterm{\add(x_2,y_2,z_2)}
        {\left\{
        y_2 \mapsto \ssym(y_2),
        z_2 \mapsto \ssym(z_2)\right\}}
        {\left\{
        y_2 \mapsto \zero,
        z_2 \mapsto x_2\right\}}
    \end{align*}
    Let us unfold the whole right-hand side
    of $r_1 \in P$, \ie let us consider $i = m = 3$.
    We have
    $\sequence{(p'_1,\qe),(p'_2,\qe),(p'_3,p'_3)} \ll_{r_1}
    T_{P,B}^{\pi}(\emptyset) \cup \patid$ where
    $p'_3 = \pt{\while(x_3,y_3)}$.
    The right-hand side of $r_1$ is 
    $\sequence{v_1, v_2, v_3} =
    \sequence{\gt(x, y), \add(x, y, z),
    \while(z, \ssym(y))}$. Let 
    $S = \sequence{\pt{v_1},\pt{v_2},\pt{v_3}}$
    and $S' = \sequence{p'_1,p'_2,p'_3}$.
    We show in Ex.~\ref{ex:algo-unif-1}
    that $\patsub{\rho}{\nu} \in \mgu(S,S')$
    where
    \begin{align*}
        \rho &= \big\{
        x \mapsto \ssym(x), \
        y \mapsto \ssym(y),\
        z \mapsto \ssym^2(z),\
        x_2 \mapsto \ssym(x_2),\ 
        x_3 \mapsto \ssym^2(x_3),\
        y_3 \mapsto \ssym(y_3)\big\} \\
        \nu &= \big\{
        x \mapsto \ssym(x_1), \
        y \mapsto \zero,\
        z \mapsto \ssym(x_1),\
        x_2 \mapsto \ssym(x_1),\ 
        x_3 \mapsto \ssym(x_1),\
        y_3 \mapsto \ssym(\zero)\big\}
    \end{align*}
    So, $r'' = \left(\patterm{u}{\rho}{\nu},
    \patterm{\while(x_3,y_3)}{\rho}{\nu}\right)
    \in (T_{P,B}^{\pi})^2(\emptyset)$ where
    $u = \while(x, y)$ is the left-hand side
    of $r_1$.
    It describes the set of binary rules
    \[\left\{r''_n =
    \left(\while(\ssym^{n+1}(x_1),\ssym^n(\zero)),
    \while(\ssym^{2n+1}(x_1),\ssym^{n+1}(\zero))
    \right)
    \;\middle\vert\; n \in \nat\right\}\]
    and we have
    $[r''_n \mid n \in \nat] =
    [r'_n \mid n \in \nat]$
    (see Ex.~\ref{ex:binunf}).
\end{example}

The following result corresponds to the
Soundness Thm.~7 of~\cite{emmesEG12}.

\begin{theorem}\label{theo:soundness}
    Let $P$ be a program and
    $B \subseteq \patruleset$ be
    correct \wrt{} $P$.
    Then, $\patunf(P,B)$ is correct
    \wrt{} $P$.
\end{theorem}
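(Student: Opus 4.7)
The proof plan is to show, by induction on $n \in \nat$, that every iterate $(T_{P,B}^{\pi})^n(\emptyset)$ is correct \wrt{} $P$; the conclusion then follows because $T_{P,B}^{\pi}$ is monotone (indeed continuous, since each output rule depends on only finitely many inputs), so $\patunf(P,B) = \bigcup_{n \in \nat} (T_{P,B}^{\pi})^n(\emptyset)$, and a union of correct sets is correct. The base case $n=0$ is vacuous. For the inductive step, assume $U = (T_{P,B}^{\pi})^n(\emptyset)$ is correct and pick an arbitrary $r' \in T_{P,B}^{\pi}(U)$. Since $\binunf(P)$ is closed under renaming, the predicate ``correct \wrt{} $P$'' is preserved by the $[\cdot]$ operator on $\patruleset$, so it suffices to handle the two template cases of Def.~\ref{def:patunf}. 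The case $r' \in [B]$ is immediate from the correctness of $B$; the substantive case is the constructed one.

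So take $r' = (\patterm{u}{\sigma}{\mu},\ \patterm{v}{\sigma_i\sigma}{\mu_i\mu})$ arising from $r = (u, \sequence{v_1,\ldots,v_m}) \in P$, index $1 \le i \le m$, pattern rules $\sequence{(p_1,\qe), \ldots, (p_{i-1},\qe), (p_i, \patterm{v}{\sigma_i}{\mu_i})} \ll_r U \cup \patid$, and a pattern mgu $\patsub{\sigma}{\mu}$ of $\sequence{p_1,\ldots,p_i}$ and $\sequence{\pt{v_1},\ldots,\pt{v_i}}$ with $\sigma$ commuting with $\sigma_i$ and $\mu_i$. Fix $n \in \nat$; the goal is $(u\sigma^n\mu,\ v(\sigma_i\sigma)^n(\mu_i\mu)) \in \binunf(P)$. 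By the inductive correctness of $U$, for $1 \le j < i$ one has $(p_j(n), \emptyseq) \in \binunf(P)$; at position $i$, either the rule is in $U$ and yields $(p_i(n),\ v\sigma_i^n\mu_i) \in \binunf(P)$, or it lies in $\patid$, in which case $p_i(n) = v\sigma_i^n\mu_i$ has the form $\fsym(x_1,\ldots,x_k)$ and the resulting binary rule belongs to $\id$. Since $\vars(p_j(n)) \subseteq \vars(p_j)$, the variable-disjointness assumption transfers to these binary rules; and by Def.~\ref{def:pattern-unif}, $\sigma^n\mu$ is an mgu of $\sequence{p_1(n),\ldots,p_i(n)}$ and $\sequence{v_1,\ldots,v_i}$. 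Applying Def.~\ref{def:binunf} therefore produces $(u\sigma^n\mu,\ v\sigma_i^n\mu_i\sigma^n\mu) \in \binunf(P)$.

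The remaining, and genuinely delicate, step is to verify that this right-hand side equals $v(\sigma_i\sigma)^n(\mu_i\mu)$, and this is precisely where the commutativity hypothesis of Def.~\ref{def:patunf} earns its keep: a short induction on $n$ shows that commutativity of $\sigma$ with $\sigma_i$ gives $(\sigma_i\sigma)^n = \sigma_i^n\sigma^n$, while commutativity of $\sigma$ with $\mu_i$ gives $\sigma^n\mu_i = \mu_i\sigma^n$, whence $v(\sigma_i\sigma)^n\mu_i\mu = v\sigma_i^n\sigma^n\mu_i\mu = v\sigma_i^n\mu_i\sigma^n\mu$, as required. I expect this algebraic rearrangement to be the only non-routine part of the argument; everything else is a faithful translation between the pattern-level iteration in Def.~\ref{def:patunf} and the binary-level iteration in Def.~\ref{def:binunf}, with $\patid$ playing at the pattern level the same role that $\id$ plays at the binary level.
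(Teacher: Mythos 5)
Your proposal is correct and follows essentially the same route as the paper: the paper factors the argument into Lemma~\ref{lem:correct_binunf} (correctness is preserved by $[\cdot]$, via closure of $\binunf(P)$ under renaming) and Proposition~\ref{prop:soundness} (one application of $T_{P,B}^{\pi}$ preserves correctness, using exactly your case split on $[B]$ versus the constructed rules, the mgu instantiation at each $n$, and the commutativity rearrangement $(v\sigma_i^n\mu_i)\sigma^n\mu = v(\sigma_i\sigma)^n(\mu_i\mu)$), then concludes by the same induction on the iterates. The only detail you gloss over is the paper's explicit remark that the first $i-1$ unfolding rules must come from $U$ rather than $\patid$ (since no $\patid$ rule has $\qe$ as right-hand side), which is immediate.
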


Finally, we adapt the non-termination
criterion of~\cite{emmesEG12} to our
setting.
\begin{theorem}
    \label{theo:detection-nonterm}
    Let $P$ be a program and
    $B \subseteq \patruleset$ be
    correct \wrt{} $P$. Suppose that
    $\patunf(P,B)$ contains a pattern
    rule of the form
    $(\patterm{u}{\sigma}{\mu},
    \patterm{u\sigma^a}{\sigma^b\sigma'}{\mu\mu'})$
    where $\sigma'$ commutes with $\sigma$
    and $\mu$. Then, for all $n \in \nat$
    and all $\theta \in \subsset$, there
    is an infinite $\rra_P$-chain that starts
    from $\sequence{u\sigma^n\mu\theta}$.
\end{theorem}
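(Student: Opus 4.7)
The plan is to peel off the pattern rule's structure to extract, for each $n \in \nat$, an explicit binary rule in $\binunf(P)$ whose right-hand side lines up with the left-hand side produced at the next iteration, and then to iterate Prop.~\ref{prop:binunf_calls} to grow an infinite $\rra_P$-chain.

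First, I would invoke Thm.~\ref{theo:soundness} to conclude that $\patunf(P,B)$ is correct \wrt{} $P$, so the given pattern rule describes a subset of $\binunf(P)$. By Defs.~\ref{def:pattern-term} and~\ref{def:pattern-rule}, at index $n$ this rule describes the binary rule whose left-hand side is $u\sigma^n\mu$ and whose right-hand side is $u\sigma^a(\sigma^b\sigma')^n\mu\mu'$. A straightforward induction on $n$ using the commutativity of $\sigma'$ with $\sigma$ yields $(\sigma^b\sigma')^n = \sigma^{bn}(\sigma')^n$, and a second induction using the commutativity of $\sigma'$ with $\mu$ yields $(\sigma')^n\mu = \mu(\sigma')^n$. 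Combining both, $(u\sigma^n\mu,\,u\sigma^{a+bn}\mu(\sigma')^n\mu') \in \binunf(P)$ for every $n \in \nat$.

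Next, fix $n \in \nat$ and $\theta \in \subsset$; set $n_0 = n$, $\theta_0 = \theta$ and $n_{k+1} = a + bn_k$, and build queries $Q_0, Q_1, Q_2, \dots$ with $Q_k$ having $u\sigma^{n_k}\mu\theta_k$ as its leading term, by induction on $k$. The base case is $Q_0 = \sequence{u\sigma^n\mu\theta}$. For the step, applying Prop.~\ref{prop:binunf_calls} to the binary rule $(u\sigma^{n_k}\mu,\,u\sigma^{n_{k+1}}\mu(\sigma')^{n_k}\mu')$ with substitution $\theta_k$ yields some $\eta_k \in \subsset$ with $u\sigma^{n_{k+1}}\mu(\sigma')^{n_k}\mu'\eta_k \in \calls_P(u\sigma^{n_k}\mu\theta_k)$. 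Setting $\theta_{k+1} = (\sigma')^{n_k}\mu'\eta_k$ makes this call equal to $u\sigma^{n_{k+1}}\mu\theta_{k+1}$, which is a term (not $\emptyseq$); by Def.~\ref{def:rewrite-rel} this gives a nonempty $\rra_P^+$-segment from $\sequence{u\sigma^{n_k}\mu\theta_k}$ to a query with leading term $u\sigma^{n_{k+1}}\mu\theta_{k+1}$, and since rule applications on a leading term propagate substitutions to the tail, this segment lifts to $Q_k \rra_P^+ Q_{k+1}$ with the required leading term. Concatenating these nonempty segments produces the desired infinite $\rra_P$-chain from $\sequence{u\sigma^n\mu\theta}$.

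The only real obstacle is Step~1: one must use both commutativity hypotheses to rearrange the right-hand pattern term into the form $u\sigma^{n_{k+1}}\mu(\sigma')^{n_k}\mu'$, so that the trailing factor $(\sigma')^{n_k}\mu'\eta_k$ can be absorbed into the next iterate's substitution $\theta_{k+1}$. Without this rearrangement, the invariant that $Q_k$ has $u\sigma^{n_k}\mu\theta_k$ as its leading term cannot be maintained and the iterative construction breaks down. Once that bookkeeping is in place, the remainder is a routine induction together with one application of Prop.~\ref{prop:binunf_calls} per step.
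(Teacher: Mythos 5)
Your proposal is correct and follows essentially the same route as the paper: the commutativity rearrangement $u\sigma^a(\sigma^b\sigma')^n\mu\mu' = (u\sigma^{a+nb}\mu)(\sigma')^n\mu'$ is exactly the paper's Lemma~\ref{lem:detection-nonterm-aux}, and your iterative step combining it with Prop.~\ref{prop:binunf_calls} and Thm.~\ref{theo:soundness} is the paper's Prop.~\ref{prop:detection-nonterm-aux} followed by the same chain-concatenation argument. Your remark about lifting each segment past the tail of the query is a point the paper leaves implicit, but it does not change the structure of the argument.
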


\begin{example}\label{ex:running2}
    Let us regard the pattern rule $(p,q)$ of
    Ex.~\ref{ex:running}. As
    $rules((p,q)) \subseteq
    \left\{r'_n \mid n \in \nat\right\}$
    with $\left\{r'_n \mid n \in \nat\right\}
    \subseteq \left[r'_n \mid n \in \nat\right]
    = \left[r''_n \mid n \in \nat\right]
    = [\rules(r'')]$ (see Ex.~\ref{ex:running1}),
    we have $(p,q) \in [r'']
    \subseteq \patunf(P,B)$.
    Moreover, $B$ is correct \wrt{}
    $P$ (see Ex.~\ref{ex:correct}) and
    $(p,q) = 
    (\patterm{u\sigma}{\sigma}{\mu},
    \patterm{(u\sigma)\sigma}{\sigma\sigma'}{\mu})$
    (see Ex.~\ref{ex:running})
    where $\sigma'$ commutes with $\sigma$
    and $\mu$.
    By Thm.~\ref{theo:detection-nonterm},
    for all $m,n \in \nat$ and
    $\theta = \{x \mapsto \ssym^n(\zero)\}$,
    the sequence
    $\sequence{(u\sigma)\sigma^m\mu\theta}$
    starts an infinite $\rra_P$-chain, with
    \[\sequence{(u\sigma)\sigma^m\mu\theta} =
    \sequence{u\sigma^{m+1}\mu\theta} =
    \sequence{\while\left(\ssym^{(n+1)+(m+1)}(\zero),
    \ssym^{m+1}(\zero)\right)}\]
    Hence, for all $n > m > 0$, the sequence
    $\sequence{\while(\ssym^n(\zero),
    \ssym^m(\zero))}$
    starts an infinite $\rra_P$-chain.
    This had already been observed 
    in Ex.~\ref{ex:binunf2}.
\end{example}

\section{Simple Patterns}
\label{sect:simple-patterns}

In practice, to implement the approach presented
in the previous section, one has to  find a way
to compute mgu's of pattern terms and to check
the non-termination condition of
Thm.~\ref{theo:detection-nonterm}. In this section,
we introduce a class of pattern terms of a special
form, called \emph{simple}, that is more restrictive
but for which we provide a unification algorithm as
well as a non-termination criterion that is easier
to check than that of Thm.~\ref{theo:detection-nonterm}.
We describe them using a new signature that consists
of unary symbols only:
\[\Upsilon = \left\{
c^{a,b} : \text{a unary symbol}
\;\middle\vert\;
c \in \scontext, (a,b) \in \nat^2\right\}\]
Any symbol $c^{a,b} \in \Upsilon$ represents all
successive embeddings of $c$ into itself of the
form $c^{a \times n + b}$ where $n \in \nat$.
Hence, for all $u \in \stermset$ and all
$n \in \nat$, we let $u(n)$ be the element of
$\termset$ obtained from $u$ by replacing every
$c^{a,b} \in \Upsilon$ by  $c^{a \times n + b}$.
Moreover, for all $\theta \in \ssubsset$ and all
$n \in \nat$, we let $\theta(n)$ be the element
of $\subsset$ defined as: for all $x \in X$, 
$(\theta(n))(x) = (\theta(x))(n)$.
In the rest of this section, we consider elements
of $\stermset$ modulo the following equivalence
relation.
\begin{definition}
    The binary relation $\equi \subseteq \stermset^2$
    is defined as: $u \equi v$ iff $u(n) = v(n)$ for
    all $n \in \nat$. It is an equivalence relation
    and we let $[u]$ denote the equivalence class of
    $u$ \wrt{} $\equi$.
\end{definition}

The following straightforward result can be
used to simplify $(\Sigma \cup \Upsilon)$-terms.
\begin{lemma}\label{lem:simple-trick}
    For all $c \in \scontext$,
    $a,b,a',b' \in \nat$ and
    $u \in \stermset$ we have
    $c^{a,b}(c^{a',b'}(u)) \equi
    c^{a+a',b+b'}(u)$
    and $c(u) \equi c^{0,1}(u)$.
\end{lemma}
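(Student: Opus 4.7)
The plan is to unfold both equivalences directly from the definition of $\equi$, reducing each side to a term in $\termset$ indexed by $n$, and then invoke the additive behaviour of context iteration.

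For the first equivalence, fix $n \in \nat$ and compute both sides of $c^{a,b}(c^{a',b'}(u))(n) = c^{a+a',b+b'}(u)(n)$. Since $c^{a,b}$ and $c^{a',b'}$ are the only symbols of $\Upsilon$ appearing at the top two positions, applying the definition of $(\cdot)(n)$ replaces them by $c^{a\cdot n+b}$ and $c^{a'\cdot n+b'}$ respectively, yielding the left-hand side $c^{a\cdot n+b}\bigl(c^{a'\cdot n+b'}(u(n))\bigr)$. The right-hand side rewrites in one step to $c^{(a+a')\cdot n+(b+b')}(u(n))$.

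The key arithmetic identity needed is $c^{p}(c^{q}(t)) = c^{p+q}(t)$ for every 1-context $c$, every $p,q \in \nat$ and every $t \in \termsett$. This I would prove by a short induction on $p$, using the recursive definition $c^0 = \square_1$ and $c^{p+1} = c(c^p)$ from Sect.~\ref{sect:sig}. Applied with $p = a\cdot n + b$ and $q = a'\cdot n + b'$, together with the arithmetic equality $(a\cdot n+b)+(a'\cdot n+b') = (a+a')\cdot n+(b+b')$, it matches the two expressions obtained above, establishing $c^{a,b}(c^{a',b'}(u))(n) = c^{a+a',b+b'}(u)(n)$ for every $n$, hence the desired equivalence.

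For the second equivalence, I would note that $c \in \scontext$ is a 1-context over $\Sigma$ containing no symbol from $\Upsilon$, so applying $(\cdot)(n)$ to $c(u)$ only rewrites the $\Upsilon$-symbols inside $u$: $c(u)(n) = c(u(n))$. On the other hand, $c^{0,1}$ is an element of $\Upsilon$, so $c^{0,1}(u)(n) = c^{0\cdot n + 1}(u(n)) = c^1(u(n)) = c(u(n))$ by the base clause of the $c^{\cdot}$ iteration. Both sides agree for every $n$, which closes the claim. There is no real obstacle here; the only subtlety is keeping clearly separated the two uses of the notation $c^{\cdot}$ — one as a unary symbol of $\Upsilon$ (to be expanded by $(\cdot)(n)$) and the other as iterated embedding of the 1-context $c$ into itself — and invoking the composition identity for the latter.
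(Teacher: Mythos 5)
Your proof is correct and follows essentially the same route as the paper's: evaluate both sides at an arbitrary $n$, reduce via the definition of $(\cdot)(n)$, and combine the iterated embeddings using $c^{p}(c^{q}(t)) = c^{p+q}(t)$ together with the arithmetic identity on the exponents. The only difference is that you make explicit (with a short induction) the composition identity that the paper uses silently in its chain of equalities.
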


\begin{example}
    Let $c = \fsym(\square_1, \zero, \square_1)\in \scontext$.
    Then, $c^{1,1} \in \Upsilon$.
    Let $u = c^{1,1}(\one) \in \stermset$.
    For all $n \in \nat$, we have
    $u(n) = c^{n+1}(\one)$.
    For instance, $u(1) = c^2(\one) =
    \fsym(\fsym(\one, \zero, \one), \zero,
    \fsym(\one, \zero, \one))$.
    We note that $v = c^{1,0}(c(\one))
    \equi c^{1,0}(c^{0,1}(\one)) \equi u$;
    indeed, for all $n \in \nat$,
    $v(n) = c^n(c(\one)) = u(n)$.
    Let $\theta = \{x \mapsto u\} \in \ssubsset$
    and $n \in \nat$.
    We have $(\theta(n))(x) = (\theta(x))(n) = u(n)$
    and, for all $y \in X \setminus \{x\}$,
    $(\theta(n))(y) = (\theta(y))(n) = y(n) = y$.
    Hence, $\theta(n) = \{x \mapsto u(n)\}$.
\end{example}

\begin{definition}\label{def:simple-pattern-term}
    A pattern term $p = \patterm{s}{\sigma}{\mu}$
    is called \emph{simple} if, for all $x \in \vars(s)$,
    $\sigma(x) = c^a(x)$ and $\mu(x) = c^b(t)$
    for some $c \in \scontext$, $a,b \in \nat$
    and $t \in \termset$.
    Then, we let $\upsilon(p) = [s\theta_p]$ where
    \[\theta_p = \left\{x \mapsto u
    \;\middle\vert\;
    \begin{array}{l}
        x \in \vars(s),\
        \sigma(x) = c^a(x),\
        \mu(x) = c^b(t) \\
        \text{if $\sigma(x) = x$
        then $u = \mu(x)$ else
        $u = c^{a,b}(t)$}
    \end{array}\right\}\]
    A pattern rule $(p,q)$ is called \emph{simple}
    if $p$ and $q$ are simple.
\end{definition}

The next result follows from
Def.~\ref{def:pattern-term} and
Def.~\ref{def:simple-pattern-term}.
\begin{lemma}\label{lem:simple-pattern-term}
    For all simple pattern terms $p$,
    all $u \in \upsilon(p)$ and
    all $n \in \nat$ we have $p(n) = u(n)$.
\end{lemma}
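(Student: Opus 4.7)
The plan is to reduce the statement about arbitrary $u \in \upsilon(p)$ to a single equality involving $\theta_p$, and then verify that equality by looking at what each substitution does variable-by-variable.

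First I would unfold the definitions. Writing $p = \patterm{s}{\sigma}{\mu}$, we have $p(n) = s(\sigma^n\mu)$ by Def.~\ref{def:pattern-term}. By Def.~\ref{def:simple-pattern-term}, $\upsilon(p) = [s\theta_p]$, so any $u \in \upsilon(p)$ satisfies $u \equi s\theta_p$, which means $u(n) = (s\theta_p)(n)$ for every $n$. It therefore suffices to prove that $s(\sigma^n\mu) = (s\theta_p)(n)$ for all $n \in \nat$. Since $s \in \termset$ contains no symbol from $\Upsilon$, the operation $(\cdot)(n)$ leaves the structure of $s$ untouched and only rewrites the $\Upsilon$-symbols occurring in the substituted subterms; hence this reduces further to showing, for every $x \in \vars(s)$, that $(\sigma^n\mu)(x) = (\theta_p(x))(n)$.

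Next I would fix such an $x$ and apply the definition of simplicity: $\sigma(x) = c^a(x)$ and $\mu(x) = c^b(t)$ for some $c \in \scontext$, $a, b \in \nat$ and $t \in \termset$. The two cases in the definition of $\theta_p$ line up naturally. If $\sigma(x) = x$, then $\sigma^n(x) = x$ and hence $\sigma^n\mu(x) = \mu(x) = c^b(t)$; meanwhile $\theta_p(x) = \mu(x) \in \termset$, so $(\theta_p(x))(n) = c^b(t)$ as well. If instead $\sigma(x) \neq x$, a short induction on $n$ (using that $c$ contains no variables, so $\sigma$ acts trivially on $c$) gives $\sigma^n(x) = c^{an}(x)$, and consequently $\sigma^n\mu(x) = c^{an}(\mu(x)) = c^{an+b}(t)$; on the other hand, $\theta_p(x) = c^{a,b}(t)$ and, by the very meaning of the $\Upsilon$-symbol $c^{a,b}$, we have $(c^{a,b}(t))(n) = c^{an+b}(t)$. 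The two sides agree in both cases.

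The proof is essentially an unwinding of definitions, and the only step that deserves some care is the induction $\sigma^n(x) = c^{an}(x)$, which relies crucially on $c$ being variable-free so that $c^a(x)\sigma = c^a(x\sigma) = c^{a+a}(x)$. Once this is in place, Lem.~\ref{lem:simple-trick} is not even needed: the whole verification is a direct computation on the two forms $\sigma(x) = x$ versus $\sigma(x) = c^a(x)$ that simplicity allows.
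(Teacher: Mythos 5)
Your proof is correct and follows essentially the same route as the paper's: reduce to a variable-by-variable check of $(\sigma^n\mu)(x) = (\theta_p(x))(n)$ for $x \in \vars(s)$, split on whether $\sigma(x) = x$, and exploit that $c$ is variable-free. The only cosmetic difference is that the paper factors the commutation step $(s\theta_p)(n) = s(n)\theta_p(n)$ into a separate auxiliary lemma, which you justify inline by observing that $s$ contains no $\Upsilon$-symbols.
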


\newpage
\begin{example}
    The pattern term
    $p = \patterm{s}{\sigma}{\mu}
    = \patterm{\fsym(\ssym(x),y)}
    {\big\{x \mapsto \ssym^2(x)\big\}}
    {\big\{x \mapsto \ssym(x_1),
    y \mapsto \zero\big\}}$
    is simple. For $c = \ssym(\square_1)$,
    we have $\sequence{\sigma(x), \mu(x)}
    = \sequence{c^2(x), c(x_1)}$
    and $\sigma(y) = y$. So
    $\upsilon(p) = [s\theta_p]$ where 
    $\theta_p = \left\{ x \mapsto c^{2,1}(x_1),
    y \mapsto \zero \right\}$.
    Moreover, $s\theta_p =
    \fsym(\ssym(c^{2,1}(x_1)), \zero) =
    \fsym(c(c^{2,1}(x_1)), \zero) \equi
    \fsym(c^{2,2}(x_1), \zero)$.
    For all $n \in \nat$,
    $p(n) = s\sigma^n\mu
    = \fsym(\ssym^{2n+2}(x_1),\zero)
    = (s\theta_p)(n)$. 
\end{example}

We note that the pattern rules
$(p,\qe)$ and $(q,\pt{v})$ produced from
Prop.~\ref{prop:correct-set} are simple.
In Ex.~\ref{ex:correct},
$\upsilon(p_1) =
[\gt(c_1^{1,1}(x), c_1^{1,0}(\zero))]$
and
$\upsilon(p_2) =
[\add(x, c_1^{1,0}(\zero), c_1^{1,0}(x))]$
where $c_1 = \ssym(\square_1)$.

\begin{example}\label{ex:non-simple}
    We illustrate the fact that non-termination
    detection with simple pattern terms is more
    restrictive than with the full class of 
    pattern terms.
    Let $P$ be the program consisting of 
    \begin{align*}
        r_1 &= \left(\while(x,y),
        \sequence{\islist(y), \while(x,\cons(x,y))}\right) \\
        r_2 & = \left(\islist(\nil),\emptyseq\right) \\
        r_3 &= \left(\islist(\cons(x,y)),\islist(y)\right)
    \end{align*}
    Let $c = \cons(x,\square_1)$.
    Then,
    $\sequence{\while(x,c^n(\nil))}
    \mathop{(\rra_{r_1} \circ \rra^n_{r_3} \circ \rra_{r_2})}
    \sequence{\while(x,c^{n+1}(\nil))}$ holds
    for all $n \in \nat$.
    Hence, for all $n \in \nat$,
    $\sequence{\while(x,c^n(\nil))}$ starts
    an infinite $\rra_P$-chain.
    To detect that, one would need an unfolded pattern
    rule of the form 
    $(\patterm{\while(x,y)}{\sigma}{\mu},
    \patterm{\while(x,y)\sigma}{\sigma}{\mu})$
    where $\sigma = \{y \mapsto c(y)\}$
    and $\mu = \{y \mapsto \nil\}$. Such a rule 
    satisfies the condition of
    Thm.~\ref{theo:detection-nonterm} but
    is not simple because $c \not\in \scontext$
    ($\vars(c) \neq \emptyset$). 
    %
\end{example}

We also define simple substitutions.
\begin{definition}\label{def:simple-pattern-subs}
    A substitution $\theta \in \ssubsset$ is called
    \emph{simple} if, for all $x \in X$,
    $\theta(x) \in [c^{a,b}(t)]$
    for some $c \in \scontext$, $a,b \in \nat$
    and $t \in \termset$.
    Then, we let $\upsilon^{-1}(\theta)$ denote the
    pattern substitution $\patsub{\sigma}{\mu}$
    such that: for all $x \in X$, if
    $\theta(x) \in [c^{a,b}(t)]$ then
    $\sigma(x) = c^a(x)$ and $\mu(x) = c^b(t)$.
\end{definition}

The next result follows from
Def.~\ref{def:pattern-subs} and
Def.~\ref{def:simple-pattern-subs}.
\begin{lemma}\label{lem:simple-pattern-subs}
    For all simple substitutions $\theta$ and
    all $n \in \nat$,
    $\theta(n) = (\upsilon^{-1}(\theta))(n)$.
\end{lemma}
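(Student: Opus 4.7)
The plan is to fix a simple substitution $\theta \in \ssubsset$, an integer $n \in \nat$, and prove the pointwise equality $(\theta(n))(x) = ((\upsilon^{-1}(\theta))(n))(x)$ for every $x \in X$. Unpacking the definitions, the left-hand side equals $(\theta(x))(n)$ by the definition of $\theta(n)$ on $\ssubsset$, and the right-hand side equals $(\sigma^n\mu)(x)$ where $\patsub{\sigma}{\mu} = \upsilon^{-1}(\theta)$. So the heart of the argument is to show that these two $\termset$-terms coincide.

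By the assumption that $\theta$ is simple, I would pick witnesses $c \in \scontext$, $a,b \in \nat$ and $t \in \termset$ with $\theta(x) \in [c^{a,b}(t)]$; then Def.~\ref{def:simple-pattern-subs} gives $\sigma(x) = c^a(x)$ and $\mu(x) = c^b(t)$. For the left-hand side, since $\theta(x) \equi c^{a,b}(t)$, I would apply the definition of $u(n)$ for $u \in \stermset$ (replace each $c^{a,b}$ by $c^{a \times n+b}$) and use that $t$ contains no symbols from $\Upsilon$ to conclude $(\theta(x))(n) = c^{an+b}(t)$.

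For the right-hand side, the key auxiliary step is that $\sigma^n(x) = c^{an}(x)$ for every $n$, which I would prove by an immediate induction on $n$, using crucially that $c \in \scontext$ means $\vars(c) = \emptyset$, so $\sigma$ does not affect the skeleton $c^k$ and only touches the unique occurrence of $x$ inside it. The same variable-freeness of $c$ gives $\mu(c^{an}(x)) = c^{an}(\mu(x)) = c^{an}(c^b(t)) = c^{an+b}(t)$. Hence $(\sigma^n\mu)(x) = c^{an+b}(t)$ as well.

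The two sides agree, and since this holds for every $x$, we obtain $\theta(n) = (\upsilon^{-1}(\theta))(n)$. I expect the only subtlety to be the careful use of $\vars(c) = \emptyset$ in the induction $\sigma^n(x) = c^{an}(x)$ and in pushing $\mu$ through the context $c^{an}$; everything else is straightforward unfolding of Def.~\ref{def:pattern-subs} and Def.~\ref{def:simple-pattern-subs}. One could mention that the choice of representative $(c,a,b,t)$ of $[c^{a,b}(t)]$ is not canonical, but this does not affect the proof since we only need existence of some representation to drive both computations to the same normal form $c^{an+b}(t)$.
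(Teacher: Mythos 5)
Your proposal is correct and follows essentially the same route as the paper: fix $x$, take the witnesses $c,a,b,t$ with $\theta(x)\in[c^{a,b}(t)]$, compute both $(\theta(n))(x)$ and $(\sigma^n\mu)(x)$ down to the common normal form $c^{a\times n+b}(t)$ using that $t$ contains no symbol of $\Upsilon$ and that $\vars(c)=\emptyset$. The only cosmetic difference is that you make the step $\sigma^n(x)=c^{a\times n}(x)$ an explicit induction, which the paper leaves implicit.
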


\begin{example}
    Let $\theta = \big\{x \mapsto c^2(\one),
    y \mapsto c(c^{2,1}(c^{1,2}(c(\zero))))
    \big\}$ where $c = \ssym(\square_1)$. 
    We have $\theta(x) \in [c^{0,2}(\one)]$,
    $\theta(y) \in [c^{3,5}(\zero)]$ and
    $\theta(z) \in [c^{0,0}(z)]$ for all
    $z \in X \setminus \{x,y\}$. So,
    $\upsilon^{-1}(\theta) =
    \patsub{\big\{y \mapsto c^3(y)\big\}}
    {\big\{x \mapsto c^2(\one),
    y \mapsto c^5(\zero)\big\}}$.
    For all $n \in \nat$, we have
    $\theta(n) = \big\{x \mapsto c^2(\one),
    y \mapsto c(c^{2n+1}(c^{n+2}(c(\zero))))\big\}
    = \big\{x \mapsto c^2(\one),
    y \mapsto c^{3n + 5}(\zero)\big\}
    = (\upsilon^{-1}(\theta))(n)$.
\end{example}

\subsection{Unification of Simple Pattern Terms}
\label{sect:pattern-unif}
For all sequences $S = \sequence{p_1,\dots,p_m}$
of simple pattern terms, we define
$\upsilon(S) = \left\{\sequence{u_1,\dots,u_m}
\mid u_1 \in \upsilon(p_1), \dots,
u_m \in \upsilon(p_m) \right\}$.
\begin{algunif}
    Let $S$ and $S'$ be sequences of simple
    pattern terms.
    Let $S_1 \in \upsilon(S)$ and $S'_1 \in \upsilon(S')$.
    \begin{itemize}
        \item If $\mgu(S_1,S'_1)$
        contains a simple substitution $\theta$
        then return $\upsilon^{-1}(\theta)$
        \item else halt with failure.
    \end{itemize}
\end{algunif}

Partial correctness follows from
the next theorem.
\begin{theorem}\label{theo:algo-unif}
    If the unification algorithm successfully
    terminates then it produces a pattern
    substitution which is an mgu of the
    input sequences.
\end{theorem}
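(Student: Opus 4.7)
The plan is to unpack Definition~\ref{def:pattern-unif} directly. Suppose the algorithm succeeds, returning $\upsilon^{-1}(\theta)$ for some simple $\theta \in \mgu(S_1,S'_1)$; I need $\upsilon^{-1}(\theta) \in \mgu(S,S')$, i.e., $(\upsilon^{-1}(\theta))(n) \in \mgu(S(n),S'(n))$ for every $n \in \nat$. Componentwise applications of Lemma~\ref{lem:simple-pattern-term} and Lemma~\ref{lem:simple-pattern-subs} rewrite the target as $\theta(n) \in \mgu(S_1(n),S'_1(n))$ for every $n$.

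The key auxiliary I would establish first is a commutation identity: $(s\rho)(n) = s(n)\rho(n)$ for all $s \in \stermset$ and $\rho \in \ssubsset$. Structural induction on $s$ settles this; the only delicate case is $s = c^{a,b}(s_1)$ with $c^{a,b} \in \Upsilon$, which is unproblematic because $c \in \scontext$ contains no variables, so $\rho$ leaves the expanded prefix $c^{an+b}$ untouched. Combined with $\theta S_1 = \theta S'_1$, this identity yields $S_1(n)\theta(n) = S'_1(n)\theta(n)$ for every $n$, showing that $\theta(n)$ is a $\Sigma$-unifier of $S_1(n)$ and $S'_1(n)$.

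For the most-general half, let $\eta \in \subsset$ be any $\Sigma$-unifier of $S_1(n)$ and $S'_1(n)$. I would view $\eta$ as an element of $\ssubsset$ via the inclusion $\termset \subseteq \stermset$ and argue that it is a $(\Sigma\cup\Upsilon)$-unifier of $S_1$ and $S'_1$; the mgu property of $\theta$ then furnishes $\rho \in \ssubsset$ with $\eta = \theta\rho$, and the commutation identity applied at level $n$ gives $\eta = \theta(n)\rho(n)$, the required factorisation. The hardest step is this lifting, because a $\Sigma$-unifier of $S_1(n),S'_1(n)$ is not in general a $(\Sigma\cup\Upsilon)$-unifier of $S_1,S'_1$ for arbitrary representatives (their $\Upsilon$-skeletons can disagree syntactically while being identical after $(n)$-expansion). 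I expect the proof to first normalise $S_1,S'_1$ using Lemma~\ref{lem:simple-trick} so that their $\Upsilon$-skeletons align with the common instance $\theta S_1 = \theta S'_1$, and then use the simplicity of $\theta$ together with a case analysis on root symbols to complete the argument.
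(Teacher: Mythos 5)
Your reduction of the goal to showing $\theta(n)\in\mgu(S_1(n),S'_1(n))$ for every $n$, via Lemma~\ref{lem:simple-pattern-term} and Lemma~\ref{lem:simple-pattern-subs}, is exactly the paper's route, and your commutation identity $(s\rho)(n)=s(n)\rho(n)$ is the paper's Lemma~\ref{lem:simple-pattern-term-aux}; the ``unifier'' half of the claim is therefore in order. The genuine gap is in the maximality half. Your plan is to take an arbitrary $\Sigma$-unifier $\eta$ of $S_1(n)$ and $S'_1(n)$, lift it to a $(\Sigma\cup\Upsilon)$-unifier of $S_1$ and $S'_1$, and factor it through $\theta$. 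As you yourself observe, this lifting fails: from $S_1(n)\eta=S'_1(n)\eta$ one only obtains $(S_1\eta)(n)=(S'_1\eta)(n)$ for that single $n$, which does not make $S_1\eta$ and $S'_1\eta$ syntactically equal (nor even $\equi$-equivalent) --- their $\Upsilon$-skeletons may genuinely disagree. The sketched repair (``normalise with Lemma~\ref{lem:simple-trick} so the skeletons align, then case-analyse root symbols'') is a research plan rather than an argument; it is precisely the hard content of the theorem and is not supplied. So the proposal does not establish that $\theta(n)$ is \emph{most general}.

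The paper closes this half by a different device (Lemma~\ref{ref:mgu-change-functions}) that never lifts level-$n$ unifiers at all: since $u(n)$ is obtained from $u$ by replacing each symbol $c^{a,b}$ by the context $c^{a\times n+b}$, one may regard $c^{a,b}$ and $c^{a\times n+b}$ as the same symbol, whereupon $S_1$, $S'_1$ and $\theta$ are \emph{identical} to $S_1(n)$, $S'_1(n)$ and $\theta(n)$, and $\theta\in\mgu(S_1,S'_1)$ is literally the statement $\theta(n)\in\mgu(S_1(n),S'_1(n))$. Whether or not you find that identification fully satisfying (it tacitly assumes the renaming collapses nothing), it is a one-step argument that sidesteps the lifting problem entirely, whereas your route runs head-on into it. To salvage your approach you would have to actually prove the lifting for the given representatives and the given $n$, and that is where the proposal currently stops.
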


In practice, as $S_1$ and $S'_1$ are sequences
of elements of $\stermset$, one can use any
classical unification algorithm
(Robinson, Martelli-Montanari\dots) to 
compute $\theta \in \mgu(S_1,S'_1)$.
Then, it suffices to check whether $\theta$
is simple, for instance using
Lem.~\ref{lem:simple-trick}.

\begin{example}[Related to Ex.~\ref{ex:running1}]
    \label{ex:algo-unif-1}
    Consider the sequences of simple pattern terms
    $S = \sequence{\pt{v_1},\pt{v_2},\pt{v_3}}$ and
    $S' = \sequence{p'_1,p'_2,p'_3}$
    where $v_1 = \gt(x,y)$,
    $v_2 = \add(x,y,z)$,
    $v_3 = \while(z, \ssym(y))$ and
    \begin{align*}
        p'_1 &=
        \patterm{\gt(x_1,y_1)}
        {\left\{x_1 \mapsto \ssym(x_1),
        y_1 \mapsto \ssym(y_1)\right\}}
        {\left\{x_1 \mapsto \ssym(x_1),
        y_1 \mapsto \zero\right\}} \\
        p'_2 &= 
        \patterm{\add(x_2,y_2,z_2)}
        {\left\{
        y_2 \mapsto \ssym(y_2),
        z_2 \mapsto \ssym(z_2)\right\}}
        {\left\{
        y_2 \mapsto \zero,
        z_2 \mapsto x_2\right\}} \\
        p'_3 &= \pt{\while(x_3, y_3)}
    \end{align*}
    Let $c = \ssym(\square_1)$,
    $S_1 =
    \sequence{\gt(x,y), \add(x,y,z), \while(z, c(y))}$ and
    \[S'_1 = \bigsequence{
        \gt\big(c^{1,1}(x_1),c^{1,0}(\zero)\big),
        \add\big(x_2,c^{1,0}(\zero),c^{1,0}(x_2)\big),
        \while(x_3, y_3)}
    \]
    We have $S_1 \in \upsilon(S)$ and
    $S'_1 \in \upsilon(S')$. Moreover,
    $\theta \in \mgu(S_1,S'_1)$ where
    \begin{align*}
        \theta = \big\{ & 
        x \mapsto c^{1,1}(x_1),\
        y \mapsto c^{1,0}(\zero),\
        z \mapsto c^{1,0}(c^{1,1}(x_1)), \\
        & x_2 \mapsto c^{1,1}(x_1),\
        x_3 \mapsto c^{1,0}(c^{1,1}(x_1)),\
        y_3 \mapsto c(c^{1,0}(\zero))\big\}
    \end{align*}
    We note that $\theta(x) \in [c^{1,1}(x_1)]$,
    $\theta(y) \in [c^{1,0}(\zero)]$,
    $\theta(z) \in [c^{2,1}(x_1)]$,
    $\theta(x_2) \in [c^{1,1}(x_1)]$,
    $\theta(x_3) \in [c^{2,1}(x_1)]$ and
    $\theta(y_3) \in [c^{1,1}(\zero)]$.
    So, $\theta$ is a simple substitution and
    the algorithm produces the pattern substitution
    $\upsilon^{-1}(\theta) = \patsub{\rho}{\nu}$ where
    \begin{align*}
        \rho &= \big\{
        x \mapsto \ssym(x), \
        y \mapsto \ssym(y),\
        z \mapsto \ssym^2(z),\
        x_2 \mapsto \ssym(x_2),\ 
        x_3 \mapsto \ssym^2(x_3),\
        y_3 \mapsto \ssym(y_3)\big\} \\
        \nu &= \big\{
        x \mapsto \ssym(x_1), \
        y \mapsto \zero,\
        z \mapsto \ssym(x_1),\
        x_2 \mapsto \ssym(x_1),\ 
        x_3 \mapsto \ssym(x_1),\
        y_3 \mapsto \ssym(\zero)\big\}
    \end{align*}
    By Thm.~\ref{theo:algo-unif},
    $\patsub{\rho}{\nu} \in \mgu(S,S')$.
\end{example}

A natural choice for $S_1$ and $S'_1$ in our
unification algorithm is to consider,
for all
$p  = \patterm{s}{\sigma}{\mu}$ in $S \cup S'$,
the term $s\theta_p \in [s\theta_p]$
(see Def.~\ref{def:simple-pattern-term}).
But this leads to an incomplete approach, \ie
an approach that may fail to find a unifier
even if one exists.

\begin{example}
    Let 
    $p = \patterm{s}{\{x \mapsto c(x)\}}{\idsub}$
    and
    $q = \patterm{s}{\{x \mapsto c_2(x)\}}{\{x \mapsto y\}}$
    where $s = \fsym(x)$,
    $c = \ssym(\square_1)$ and
    $c_2 = c^2$.
    Then, $p$ and $q$ are simple.
    Let $\theta = \patsub{\{x \mapsto c(x)\}}{\{x \mapsto y\}}$.
    For all $n \in \nat$, we have $p(n) = \fsym(c^n(x))$
    and $q(n) = \fsym(c^{2n}(y))$, hence 
    $\theta(n) = \{x \mapsto c^n(y)\}$ is
    a unifier of $p(n)$ and $q(n)$.
    Therefore, $\theta$ is a unifier of
    $p$ and $q$.
    On the other hand, we have
    $s\theta_p = \fsym(c^{1,0}(x))$ and
    $s\theta_q = \fsym(c_2^{1,0}(y))$.
    As $c^{1,0}$ and $c_2^{1,0}$ are different
    symbols, 
    $\mgu(s\theta_p, s\theta_q) = \emptyset$.
    So, from $s\theta_p$ and $s\theta_q$, the
    unification algorithm fails to find a unifier
    for $p$ and $q$.
    Now, let us choose the term
    $u = \fsym(c^{1,0}(c^{1,0}(y)))$ in
    $[s\theta_q]$. The substitution
    $\eta = \{x \mapsto c^{1,0}(y)\}$
    is simple and belongs to $\mgu(s\theta_p, u)$.
    So, the unification algorithm succeeds and
    returns $\upsilon^{-1}(\eta) = \theta$.
\end{example}

\subsection{A Non-Termination Criterion}
\label{sect:nonterm-criteria}
 
Now, we provide a non-termination criterion
that is simpler to implement than that of
Thm.~\ref{theo:detection-nonterm}. It relies
on pattern rules of the following form, which
is easy to check in practice.

\begin{definition}\label{def:special-pattern-rule}
    We say that a pattern rule $r = (p,q)$ is 
    \emph{special} if it is simple and there
    exists
    \begin{align*}
        c\big(c_1^{a_1,b_1}(t_1),\dots,c_m^{a_m,b_m}(t_m)\big)
        \in \upsilon(p) \quad\text{and}\quad
        c\big(c_1^{a'_1,b'_1}(t_1\rho),\dots,c_m^{a'_m,b'_m}(t_m\rho)\big)
        \in \upsilon(q)
    \end{align*}
    such that $c$ is an $m$-context with $\vars(c) = \emptyset$,
    $\rho \in \subsset$ and
    \begin{enumerate}
        \item \label{theo:detection-nonterm-simple-1}
        $\forall i:
        (t_i \in X) \lor (t_i \in \termset \land \vars(t_i) = \emptyset)$,
        \item \label{theo:detection-nonterm-simple-2}
        $\forall i,j:
        (t_i \in X \land t_i = t_j) \Rightarrow c_i = c_j$,
        \item \label{theo:detection-nonterm-simple-3}
        $\{(a_i,a'_i) \mid \vars(t_i) = \emptyset\} = \{(e,e)\}$
        with $0 < e$, 
        $\{(a_i,a'_i) \mid t_i \in X\} = \{(a,a')\}$
        with $a \leq a'$,
        \item \label{theo:detection-nonterm-simple-4}
        $\{(b_i,b'_i) \mid \vars(t_i) = \emptyset\} = \{(b,b')\}$
        with $b \leq b'$,
        $\{(b_i,b'_i) \mid t_i \in X\} = \{(d,d')\}$,
        \item \label{theo:detection-nonterm-simple-5}
        $k = (b' - b) / e \in \nat$ and
        $a = a' \Rightarrow 0 \leq (d' - d) - a \times k$.
    \end{enumerate}
    Then, we let $\alpha(r) = 0$ if $a = a'$ and
    $\alpha(r) =
    \frac{a \times k - (d' - d)}{a' - a}$ otherwise.
\end{definition}

The existence of a special pattern rule implies
non-termination:
\begin{theorem}
    \label{theo:detection-nonterm-simple}
    Let $P$ be a program and $B \subseteq \patruleset$
    be correct \wrt{} $P$. Suppose that $\patunf(P,B)$
    contains a special pattern rule $r = (p,q)$.
    Then, for all $n \in \nat$ such that
    $n \geq \alpha(r)$ and all $\theta \in \subsset$,
    there is an infinite $\rra_P$-chain that starts
    from $\sequence{p(n)\theta}$.
\end{theorem}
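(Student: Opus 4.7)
The plan is to reduce the claim to the existence of an infinite $\rra_{\binunf(P)}$-chain, which by Thm.~\ref{thm:observing-termination-lp} yields an infinite $\rra_P$-chain from the same starting sequence. Since $B$ is correct and $\patunf(P,B)$ is correct by Thm.~\ref{theo:soundness}, the special pattern rule $r = (p,q)$ is correct, so $(p(n),q(n)) \in \binunf(P)$ for every $n \in \nat$. The heart of the proof will be showing that these binary rules can be chained together along an arithmetic progression in $n$.

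Concretely, I would establish the following \emph{key lemma}: for every $n \geq \alpha(r)$ there exists $\gamma_n \in \subsset$ with $q(n) = p(n+k)\gamma_n$, where $k = (b'-b)/e \in \nat$ as in condition~\ref{theo:detection-nonterm-simple-5}. Using Lem.~\ref{lem:simple-pattern-term} and the representatives of $\upsilon(p)$, $\upsilon(q)$ given in Def.~\ref{def:special-pattern-rule}, $p(n)$ and $q(n)$ are computed position by position. For each \emph{ground} index $i$, the $i$-th argument of $p(n+k)$ is $c_i^{e(n+k)+b}(t_i)$, which equals $c_i^{en+b'}(t_i)$ since $ek = b'-b$; this already coincides with the $i$-th argument $c_i^{en+b'}(t_i\rho) = c_i^{en+b'}(t_i)$ of $q(n)$, and no action of $\gamma_n$ is needed. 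For each \emph{variable} index $i$ (with $t_i = x_i \in X$), I define $x_i\gamma_n = c_i^{(a'-a)n + (d'-d-ak)}(x_i\rho)$. Condition~\ref{theo:detection-nonterm-simple-2} ensures that whenever two variable indices share a variable the associated $c_i$ coincide, so $\gamma_n$ is well defined; conditions~\ref{theo:detection-nonterm-simple-3} and~\ref{theo:detection-nonterm-simple-5} ensure the exponent $(a'-a)n + (d'-d-ak)$ is non-negative exactly when $n \geq \alpha(r)$ (covering the $a = a'$ boundary case where $\alpha(r) = 0$ and condition~\ref{theo:detection-nonterm-simple-5} forces $d' - d - ak \geq 0$). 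A routine exponent calculation then shows that $c_i^{a(n+k)+d}(x_i)\gamma_n = c_i^{a'n+d'}(x_i\rho)$, which is the $i$-th argument of $q(n)$.

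Armed with the lemma, I would construct the infinite chain as follows. Fix $n \geq \alpha(r)$ and $\theta \in \subsset$. Since $k \geq 0$ (because $b \leq b'$ and $e > 0$), every $n + ik$ also satisfies $n + ik \geq \alpha(r)$, so $(p(n+ik), q(n+ik)) \in \binunf(P)$ and the lemma provides $\gamma_{n+ik}$. Starting from the singleton sequence $\sequence{p(n)\theta}$, rewriting with a fresh variant of $(p(n), q(n))$ via $\rra_{\binunf(P)}$ produces a sequence whose sole term is an instance of $q(n) = p(n+k)\gamma_n$, hence an instance of $p(n+k)$; the rule $(p(n+k), q(n+k))$ is then applicable, and iterating yields an infinite $\rra_{\binunf(P)}$-chain from $\sequence{p(n)\theta}$. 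Thm.~\ref{thm:observing-termination-lp} converts this into the desired infinite $\rra_P$-chain.

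The main obstacle is the arithmetic in the key lemma: handling the three regimes (ground $t_i$, variable $t_i$ with $a = a'$, variable $t_i$ with $a < a'$) uniformly, verifying non-negativity of the exponent in $\gamma_n$ precisely at the threshold $n = \alpha(r)$, and confirming well-definedness when several indices $i$ share a variable. Once that arithmetic is pinned down, the rest is a straightforward iteration in $\binunf(P)$ followed by a single appeal to Thm.~\ref{thm:observing-termination-lp}.
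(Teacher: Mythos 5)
Your proof is correct and follows essentially the same route as the paper: the key lemma you isolate (for $n \geq \alpha(r)$ there is $\gamma_n$ with $q(n) = p(n+k)\gamma_n$, built from the exponent $l = (a'-a)n + (d'-d) - a\times k$, with well-definedness from condition~2 and non-negativity of $l$ from conditions~3 and~5) is exactly the paper's Lemma~\ref{lem:detection-nonterm-special-aux}, and the iteration along $n, n+k, n+2k, \dots$ matches the paper's concluding argument. The only (harmless) difference is the last step: you build an infinite $\rra_{\binunf(P)}$-chain and invoke Thm.~\ref{thm:observing-termination-lp}, whereas the paper iterates Prop.~\ref{prop:binunf_calls} directly to obtain the infinite $\rra_P$-chain without passing through Thm.~\ref{thm:observing-termination-lp}.
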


\begin{example}
    In Ex.~\ref{ex:running1}, the set $\patunf(P,B)$
    contains the pattern rule $r'' = (p,q)$ and we
    have
    \begin{align*}
        \while(\ssym^{1,1}(x_1),\ssym^{1,0}(\zero))
        &=
        c\left(c_1^{a_1,b_1}(t_1), c_2^{a_2,b_2}(t_2)\right)
        \in \upsilon(p) \\
        \while(\ssym^{2,1}(x_1),\ssym^{1,1}(\zero))
        &=
        c\left(c_1^{a'_1,b'_1}(t_1), c_2^{a'_2,b'_2}(\zero)\right)
        \in \upsilon(q)
    \end{align*}
    (with a slight abuse of notation when writing
    $\ssym^{1,1}$, $\ssym^{1,0}$ and $\ssym^{2,1}$).
    Moreover, 
    \begin{itemize}
        \item $\{(a_i,a'_i) \mid \vars(t_i) = \emptyset\}
        = \{(a_2,a'_2)\} = \{(1,1)\} = \{(e,e)\}$
        with $0 < e$,
        \item $\{(a_i,a'_i) \mid t_i \in X\}
        = \{(a_1,a'_1)\} = \{(1,2)\} = \{(a,a')\}$
        with $a < a'$,
        \item $\{(b_i,b'_i) \mid \vars(t_i) = \emptyset\}
        = \{(b_2,b'_2)\} = \{(0,1)\} = \{(b,b')\}$ with
        $b \leq b'$,
        \item $\{(b_i,b'_i) \mid t_i \in X\}
        = \{(1,1)\} = \{(d,d')\}$,
        \item $k = (b' - b) / e = (1 - 0) / 1 = 1 \in \nat$.
    \end{itemize}
    So, $\alpha(r) =
    \frac{1 \times 1 - (1 - 1)}{2 - 1} = 1$.
    Then, by Thm.~\ref{theo:detection-nonterm-simple},
    for all $n \in \nat$ such that $n \geq 1$ and all
    $\theta \in \subsset$, there
    is an infinite $\rra_P$-chain that starts from
    $\sequence{p(n)\theta} =
    \sequence{\while(\ssym^{n+1}(x_1),\ssym^n(\zero))\theta}$.
    This corresponds to what we observed in
    Ex.~\ref{ex:binunf2} and Ex.~\ref{ex:running2}.
    For instance, from $n = 1$ and
    $\theta = \{x_1 \mapsto \zero\}$, we get:
    there is an infinite $\rra_P$-chain
    that starts from
    $\sequence{\while(\ssym^2(\zero),\ssym(\zero))}$.
\end{example}

Def.~\ref{def:special-pattern-rule}
requires that 
$A_1 = \{i \mid \vars(t_i) = \emptyset\}$
and
$A_2 = \{i \mid t_i \in X\}$ are not empty.
This can be lifted as follows.
If $A_1 = \emptyset$ or $A_2 = \emptyset$
then we demand that
$a_1 = \dots = a_m = a$,
$b_1 = \dots = b_m = b$,
$a'_1 = \dots = a'_m = a'$ and
$b'_1 = \dots = b'_m = b'$. Moreover:
\begin{itemize}
    \item if $A_1 = \emptyset$ then we replace
    \ref{theo:detection-nonterm-simple-3}--\ref{theo:detection-nonterm-simple-5}
    in Def.~\ref{def:special-pattern-rule} by
    $a \leq a'$ and $a = a' \Rightarrow b \leq b'$;
    we also let $\alpha(r) = 0$ if $a = a'$ and
    $\alpha(r) = \frac{b - b'}{a' - a}$ otherwise;
    \item if $A_2 = \emptyset$ then we demand that
    $a = a'$ and we replace
    \ref{theo:detection-nonterm-simple-3}--\ref{theo:detection-nonterm-simple-5}
    in Def.~\ref{def:special-pattern-rule} by
    $0 < a$ and $k = (b' - b) / a \in \nat$;
    we also let $\alpha(r) = 0$.
\end{itemize}

\section{Experimental Evaluation}
\label{sect:experiments}
We have implemented the approach
of Sect.~\ref{sect:simple-patterns}
in our tool \nti{}, which is
the only tool participating in the
\emph{International Termination Competition}%
\footnote{\url{http://termination-portal.org/wiki/Termination_Competition}}
capable of disproving termination
of logic programs (LPs).
We used the natural choice for $S_1$
and $S'_1$ in the unification algorithm,
even if it leads to an incomplete approach
(see end of Sect.~\ref{sect:pattern-unif}).
We ran \nti{} on 41 LPs obtained by translating
term rewrite systems (TRSs) of the
\emph{Termination Problem Data Base}%
\footnote{\url{http://termination-portal.org/wiki/TPDB}}
(TPDB) that are known to be non-looping
non-terminating. These LPs are small but
they are representative of the
kind of non-termination that we want to
capture.
We used the following configuration:
MacBook Pro 2020 with Apple M1 chip,
16~GB RAM, macOS Sequoia 15.4.1.
Table~\ref{table:results-aprove10} shows
the results for 7~LPs obtained from 
directory \verb+AProVE_10+ (which consists of
14~TRSs but we discarded those that translate
to LPs that are not in the scope of our
technique, \ie LPs that terminate or
involve 1-contexts which are not elements
of $\scontext$,
see Ex.~\ref{ex:non-simple}).
Table~\ref{table:results-eeg12} shows the
results for 34~LPs obtained from 
directory \verb+EEG_IJCAR_12+, originally
proposed to evaluate the approach 
of~\cite{emmesEG12} (it consists of 49~TRSs
but, again, we discarded those that translate
to LPs that are out of scope).
The tables have the following structure: column 
``Program'' gives the name of the program together
with its number of rules and relations,
``Mode'' gives the mode of interest
(\verb+i+ means \emph{input}, \ie a 
term with no variable),
``\nti'' gives the non-terminating term
provided by \nti{},
``\#unf'' gives the number of generated
unfolded rules and
``Time(ms)'' gives the time in milliseconds
(we used a time-out of 10 seconds).
The 4~programs marked with $\dagger$
are LP translations of TRSs that have
not been proven non-terminating by any TRS
analyser participating in the competition
until 2024. The results show that our
approach succeeds on them.
On the other hand, our approach fails on 
5~programs.
Our results can be reproduced using our
tool and the benchmarks available at
\url{https://github.com/etiennepayet/nti}.

\begin{table}[p!]
    \centering
    \caption{Logic programs obtained from
    \tt{}TPDB/TRS\_Standard/AProVE\_10}
    \label{table:results-aprove10}
    {\tablefont\begin{tabular}{@{\extracolsep{\fill}}lllrr}
        \topline
        Program (\#rules, \#rel) & Mode & \nti & \#unf & Time(ms)
        \midline
        \verb+andIsNat+ (3, 2) & \verb+f(i,i)+ & \verb+f(0,0)+ & 5 & 100 \\
        \verb+double+ (3, 2) & \verb+f(i)+ & \verb+f(0)+ & 4 & 100 \\
        \verb+ex1+ (3, 2) & \verb+f(i,i)+ & \verb+f(0,0)+ & 4 & 100 \\
        \verb+ex2+ (3, 2) & \verb+g(i)+ & \verb+g(0)+ & 4 & 100 \\
        \verb+ex3+ (4, 2) & \verb+g(i,i)+ & \verb+g(0,0)+ & 12 & 100 \\
        \verb+halfdouble+ (7, 4) & \verb+f(i)+ & \verb+f(0)+ & 10 & 100 \\
        \verb+isNat+ (3, 2) & \verb+f(i)+ & \verb+f(0)+ & 4 & 100
        \botline
    \end{tabular}}
\end{table}

\begin{table}[p!]
    \centering
    \caption{Logic programs obtained from
    \tt{}TPDB/TRS\_Standard/EEG\_IJCAR\_12}
    \label{table:results-eeg12}
    {\tablefont\begin{tabular}{@{\extracolsep{\fill}}lllrr}
        \topline
        Program (\#rules, \#rel) & Mode & \nti & \#unf & Time(ms)
        \midline
        \verb+emmes-nonloop-ex1_1+ (5, 4) & \verb+f(i,i)+ & \verb+f(s(0),0)+ & 6  & 90  \\
        \verb+emmes-nonloop-ex1_2+ (7, 5) $\dagger$ & \verb+f(i,i)+ & \verb+f(s(0),0)+ & 38 & 120 \\
        \verb+emmes-nonloop-ex1_3+ (7, 5) $\dagger$ & \verb+f(i,i)+ & \verb+f(s(0),0)+ & 38 & 120 \\
        \verb+emmes-nonloop-ex1_4+ (7, 5) & \verb+f(i,i)+ & \verb+f(s(0),0)+ & 29 & 120 \\
        \verb+emmes-nonloop-ex1_5+ (7, 5) & \verb+f(i,i)+ & \verb+f(s(0),0)+ & 29 & 130 \\
        \verb+emmes-nonloop-ex2_1+ (6, 4) $\dagger$ & \verb+f(i,i)+ & \verb+f(s(0),0)+ & 26 & 120 \\
        \verb+emmes-nonloop-ex2_2+ (5, 3) & \verb+f(i,i)+ & \verb+f(s(0),0)+ & 8  & 100 \\
        \verb+emmes-nonloop-ex2_3+ (6, 4) $\dagger$ & \verb+f(i,i)+ & \verb+f(s(0),0)+ & 26 & 120 \\
        \verb+emmes-nonloop-ex2_4+ (8, 5) & \verb+f(i,i)+ & \verb+f(s(0),0)+ & 140 & 240 \\
        \verb+emmes-nonloop-ex2_5+ (8, 5) & \verb+f(i,i)+ & \verb+f(s(0),0)+ & 140 & 270 \\
        \verb+emmes-nonloop-ex3_1+ (6, 4) & \verb+f(i)+ & \verb+f(s(0))+ & 30  & 120 \\
        \verb+emmes-nonloop-ex3_2+ (6, 4) & \verb+f(i)+ & \verb+f(s(0))+ & 30  & 110 \\
        \verb+emmes-nonloop-ex3_3+ (8, 5) & \verb+f(i)+ & \verb+f(s(0))+ & 78  & 180 \\
        \verb+emmes-nonloop-ex3_4+ (8, 5) & \verb+f(i)+ & \verb+f(s(0))+ & 178 & 270 \\
        \verb+emmes-nonloop-ex4_1+ (7, 4) & \verb+f(i)+ & \verb+f(0)+ & 12 & 100 \\
        \verb+emmes-nonloop-ex4_2+ (9, 5) & \verb+f(i)+ & \verb+f(0)+ & 40 & 140 \\
        \verb+emmes-nonloop-ex4_3+ (9, 5) & \verb+f(i)+ & \verb+f(0)+ & 40 & 130 \\
        \verb+emmes-nonloop-ex4_4+ (9, 5) & \verb+f(i)+ & \verb+f(0)+ & 39 & 140 \\
        \verb+emmes-nonloop-ex5_1+ (9, 5) & \verb+f(i)+ & \verb+f(s(0))+ & 40 & 140 \\
        \verb+emmes-nonloop-ex5_2+ (8, 4) & \verb+f(i)+ & \verb+f(s(0))+ & 14 & 100 \\
        \verb+emmes-nonloop-ex5_3+ (9, 5) & \verb+f(i)+ & \verb+f(s(0))+ & 40 & 140 \\
        \verb+enger-nonloop-ex_payet+ (6, 3) & \verb+while(i,i)+ & ? & 1296 & time out \\
        \verb+enger-nonloop-isDNat+ (3, 2) & \verb+f(i)+ & \verb+f(0)+ & 4 & 90 \\
        \verb+enger-nonloop-isTrueList+ (3, 2) & \verb+f(i)+ & \verb+f(nil)+ & 4 & 90 \\
        \verb+enger-nonloop-swap_decr+ (5, 3) & \verb+f(i)+ & ? & 41017 & time out\\
        \verb+enger-nonloop-swapX+ (3, 2) & \verb+g(i)+ & \verb+g(0)+ & 4 & 90 \\
        \verb+enger-nonloop-swapXY+ (3, 2) & \verb+g(i,i)+ & \verb+g(0,0)+ & 4 & 90 \\
        \verb+enger-nonloop-swapXY2+ (3, 2) & \verb+g(i,i)+ & \verb+g(0,0)+ & 4 & 90 \\
        \verb+enger-nonloop-toOne+ (5, 3) & \verb+f(i)+ & \verb+f(s(0))+ & 7 & 90 \\
        \verb+enger-nonloop-unbounded+ (3, 2) & \verb+h(i,i)+ & \verb+h(s(0),0)+ & 4 & 90 \\
        \verb+enger-nonloop-while-lt+ (3, 2) & \verb+while(i,i)+ & \verb+while(0,0)+ & 4 & 90 \\
        \verb+rybalchenko-nonloop-popl08+ (15, 7) & \verb+while(i,i)+ & ? & 9505 & time out \\
        \verb+velroyen-nonloop-AlternatingIncr_c+ (11, 6) & \verb+while(i)+ & ? & 2035 & time out \\
        \verb+velroyen-nonloop-ConvLower_c+ (12, 6) & \verb+while(i)+ & ? & 1341 & time out
        \botline
    \end{tabular}}
\end{table}

\section{Related Work}\label{sect:rel-work}
The only other approach we are aware of for
proving non-looping non-termination of logic
programs is that of~\cite{payet24}. Roughly,
it detects infinite chains of the form
$\seqset{s}_0
\mathop{(\rra^*_{r_1} \circ \rra_{r_2})}
\seqset{s}_1 \mathop{(\rra^*_{r_1} \circ \rra_{r_2})}
\cdots$ where $(r_1,r_2)$ is a \emph{recurrent pair}
of binary rules (we note that the infinite
$\rra_P$-chain of Ex.~\ref{ex:binunf2} does not
have this form). 
This approach
seems to address another class of non-loopingness
(compared to that of this paper):
it is not able to disprove termination
of the programs of Tables~\ref{table:results-aprove10}
and~\ref{table:results-eeg12} and, on the other hand,
it is able to disprove termination of programs%
\footnote{\uppereg those in directories
\verb+TPDB/Logic_Programming/Payet_22+,
\verb+TPDB/Logic_Programming/Payet_23+
and \verb+TPDB/Logic_Programming/Payet_24+,
proposed to evaluate the approach
of~\cite{payet24}.}
on which our approach fails.

Loop checking (see, \eg the paper
by~\cite{bolAK91}) is also related to
our work. It attempts to prune infinite rewrites
at runtime using necessary conditions for the
existence of infinite chains (hence, there is
a risk of pruning a finite rewrite). In contrast,
our approach uses a sufficient condition
(see Thm.~\ref{theo:detection-nonterm}
and Thm.~\ref{theo:detection-nonterm-simple})
to \emph{prove} the existence of atomic
goals that start a non-terminating chain.

Tabling (see, \eg the paper
by~\cite{sagonasSW94}) is another related
technique that avoids infinite rewrites by
storing intermediate results in a table and
reusing them when needed. Only loops of a 
particular simple form are detected (\ie
when a variant occurs in the evaluation of a
subgoal). We are not aware of any
tabling-based approach that can capture
non-looping non-termination.

Another related technique is that 
of~\cite{payetM06} which proves the
existence of loops using neutral 
argument positions of predicate symbols.

\section{Conclusion}\label{sect:conclusion}
We have presented a new approach, based on
a new unfolding technique that generates
correct patterns of rules, to disprove
termination of logic programs.
We have implemented it in our tool \nti{}
and we have successfully evaluated it on
logic programs obtained by translating
TRSs from the TPDB.

Future work will be concerned with completeness
of our unification algorithm, extending
Prop.~\ref{prop:correct-set} to get more
initial pattern rules and taking into account
1-contexts with variables in simple pattern
terms (to deal with programs as that of
Ex.~\ref{ex:non-simple}).
We also plan to adapt our approach to TRSs
and to compare it to that
of~\cite{emmesEG12}.
Moreover, we will compare our approach
to that of~\cite{payet24} on a theoretical
level.

\bibliographystyle{tlplike}

\begin{thebibliography}{}

\bibitem[\protect\citeauthoryear{Apt}{Apt}{1997}]{apt97}
{\sc Apt, K.~R.} 1997.
\newblock {\em From Logic Programming to {P}rolog}.
\newblock Prentice Hall International series in computer science. Prentice
  Hall.

\bibitem[\protect\citeauthoryear{Baader and Nipkow}{Baader and
  Nipkow}{1998}]{baaderN98}
{\sc Baader, F.} {\sc and} {\sc Nipkow, T.} 1998.
\newblock {\em Term Rewriting and All That}.
\newblock Cambridge University Press.

\bibitem[\protect\citeauthoryear{Bol, Apt, and Klop}{Bol
  et~al\mbox{.}}{1991}]{bolAK91}
{\sc Bol, R.~N.}, {\sc Apt, K.~R.}, {\sc and} {\sc Klop, J.~W.} 1991.
\newblock An analysis of loop checking mechanisms for logic programs.
\newblock {\em Theoretical Computer Science\/}~{\em 86,\/}~1, 35--79.

\bibitem[\protect\citeauthoryear{Codish and Taboch}{Codish and
  Taboch}{1999}]{codishT99}
{\sc Codish, M.} {\sc and} {\sc Taboch, C.} 1999.
\newblock {A} semantic basis for the termination analysis of logic programs.
\newblock {\em Journal of Logic Programming\/}~{\em 41,\/}~1, 103--123.

\bibitem[\protect\citeauthoryear{Emmes, Enger, and Giesl}{Emmes
  et~al\mbox{.}}{2012}]{emmesEG12}
{\sc Emmes, F.}, {\sc Enger, T.}, {\sc and} {\sc Giesl, J.} 2012.
\newblock Proving non-looping non-termination automatically.
\newblock In {\em Proc. of the 6th International Joint Conference on Automated
  Reasoning (IJCAR'12)}, {B.~Gramlich}, {D.~Miller}, {and} {U.~Sattler}, Eds.
  LNCS, vol. 7364. Springer, 225--240.

\bibitem[\protect\citeauthoryear{Payet}{Payet}{2024}]{payet24}
{\sc Payet, E.} 2024.
\newblock Non-termination in term rewriting and logic programming.
\newblock {\em Journal of Automated Reasoning\/}~{\em 68,\/}~4, 24~pages.

\bibitem[\protect\citeauthoryear{Payet and Mesnard}{Payet and
  Mesnard}{2006}]{payetM06}
{\sc Payet, E.} {\sc and} {\sc Mesnard, F.} 2006.
\newblock Nontermination inference of logic programs.
\newblock {\em {ACM} Transactions on Programming Languages and Systems\/}~{\em
  28,\/}~2, 256--289.

\bibitem[\protect\citeauthoryear{Sagonas, Swift, and Warren}{Sagonas
  et~al\mbox{.}}{1994}]{sagonasSW94}
{\sc Sagonas, K.}, {\sc Swift, T.}, {\sc and} {\sc Warren, D.~S.} 1994.
\newblock {XSB} as an efficient deductive database engine.
\newblock In {\em Proc. of the 1994 {ACM} {SIGMOD} International Conference on
  Management of Data}, {R.~T. Snodgrass} {and} {M.~Winslett}, Eds. {ACM} Press,
  442--453.

\end{thebibliography}

\newpage
\appendix

\section{Proofs}

In this appendix, we provide the proofs of the results
presented in the paper.

\subsection{Proof of Proposition~\ref{prop:binunf_calls}
(Section~\ref{sect:binunf})}
Let $\gamma$ be a renaming such that
$(u\gamma,v\gamma)$ is variable disjoint
from $u\sigma$. Let
$\eta = \{x\gamma \mapsto x\sigma
\mid x \in \vars(u),\
x\gamma \neq x\sigma\}$.
\begin{itemize}
    \item First, we prove that $\eta$ is a
    substitution. Let $(x \mapsto s)$ and
    $(y \mapsto t)$ be some bindings in $\eta$.
    By definition of $\eta$, we have
    $(x \mapsto s) = (x'\gamma \mapsto x'\sigma)$
    and
    $(y \mapsto t) = (y'\gamma \mapsto y'\sigma)$
    for some variables $x'$ and $y'$ in
    $\vars(u)$. As $\gamma$ is a variable
    renaming, it is a bijection on $X$, so
    if $x = y$ then $x' = y'$ and hence
    $(x \mapsto s) = (y \mapsto t)$.
    Consequently, for any bindings
    $(x \mapsto s)$ and $(y \mapsto t)$ in
    $\eta$, $(x \mapsto s) \neq (y \mapsto t)$
    implies $x\neq y$. Moreover, by definition
    of $\eta$, for any $(x \mapsto s) \in \eta$
    we have $x \neq s$.
    Therefore, $\eta$ is a substitution.
    \item Then, we prove that $\eta$ is a
    unifier of $u\gamma$ and $u\sigma$.
    \begin{itemize}
        \item Let $x\in\vars(u)$. Then, by
        definition of $\eta$,
        $x\gamma\eta=x\sigma$. So,
        $u\gamma\eta = u\sigma$.
        \item Let $y\in\dom(\eta)$. Then,
        $y = x\gamma$ for some $x \in \vars(u)$.
        Hence, $y \in \vars(u\gamma)$. As $u\gamma$
        is variable disjoint from $u\sigma$, we have
        $y \not\in \vars(u\sigma)$. Therefore, we have
        $\dom(\eta) \cap \vars(u\sigma) = \emptyset$,
        so $u\sigma\eta = u\sigma$.
    \end{itemize}
    Consequently, we have
    $u\gamma\eta = u\sigma\eta$.
\end{itemize}
Hence, $u\gamma$ and $u\sigma$ unify,
so $\mgu(u\gamma, u\sigma) \neq \emptyset$
(see, \eg Thm.~4.5.8 of~\cite{baaderN98}).
Let $\rho \in \mgu(u\gamma, u\sigma)$.
By Prop.~4.2 of~\cite{codishT99},
we have $\calls_P(u\sigma) =
\{v'\rho \mid
(u',v') \in \binunf(P),\
\rho \in \mgu(u\sigma,u')\}$.
Consequently, as
$(u\gamma,v\gamma) \in \binunf(P)$
(because $\binunf(P)$ is closed
by renaming), for $\theta = \gamma\rho$
we have $v\theta \in \calls_P(u\sigma)$.

\subsection{Proof of Proposition~\ref{prop:correct-set}
(Section~\ref{sect:patterns})}
We proceed in two steps.

\begin{proof}[Proof that $(p,\qe)$ is correct \wrt{} $P$]
    We have $p = \patterm{v}{\sigma}{\mu}$ where
    \begin{align*}
        v &= c(x_1,\dots,x_m) \quad
        \text{with $\vars(c) = \emptyset$} \\
        \sigma &= \left\{x_k \mapsto c_k(x_k)
        \;\middle\vert\; 1 \leq k \leq m,\
        c_k(x_k) \neq x_k \right\} \quad
        \text{with $\vars(c_1,\dots,c_m) = \emptyset$} \\
        \mu &= \left\{x_k \mapsto t_k
        \;\middle\vert\; 1 \leq k \leq m,\
        t_k \neq x_k\right\}
    \end{align*}
    
    We prove by induction on $n$ that
    $(p(n),\qe(n)) \in \binunf(P)$ for all
    $n \in \nat$, where
    $(p(n),\qe(n)) = (p(n),\emptyseq)$.
    \begin{itemize}
        \item (Base: $n = 0$) Here, 
        $p(n) = v\sigma^n\mu = v\mu
        = c(t_1,\cdots,t_m) = u'$, so 
        $(p(n),\emptyseq) = (u',\emptyseq) = r'$.
        But, by Def.~\ref{def:binunf},
        $r' \in \binunf(P)$. Hence,
        $(p(n),\emptyseq) \in \binunf(P)$.
        \item (Induction)
        Suppose that for some $n \in \nat$
        we have $(p(n),\emptyseq) \in \binunf(P)$.
        Note that
        $p(n) = v\sigma^n\mu = 
        c(c_1^n(t_1),\dots,c_m^n(t_m))$.
        
        Let us apply Def.~\ref{def:binunf} to
        unfold the right-hand side of $r = (u,v)$
        using $(p(n),\emptyseq)$. Let $\gamma$
        be a renaming such that 
        $(p(n)\gamma,\emptyseq\gamma) =
        (p(n)\gamma,\emptyseq)
        $ is variable disjoint from $r$. Then, we have
        $(p(n)\gamma,\emptyseq) \ll_{r} \binunf(P)$.
        Note that $\vars(c) = \emptyset$. Moreover,
        \begin{align*}
            p(n)\gamma &=
            c(c_1^n(t_1),\dots,c_m^n(t_m))\gamma =
            c(c_1^n(t_1\gamma),\dots,c_m^n(t_m\gamma))
            \quad\text{and} \\
            v &= c(x_1,\dots,x_m)
        \end{align*}
        are variable disjoint,
        so $x_1,\dots,x_m$ do not occur in
        $p(n)\gamma$. Consequently, the
        Martelli-Montanari unification
        algorithm (see, \eg Sect.~2.6 of
        the book by~\cite{apt97})
        applied to $p(n)\gamma$ and $v$
        returns the mgu
        $\theta = \left\{x_k \mapsto c_k^n(t_k\gamma)
        \;\middle\vert\; 1 \leq k \leq m\right\}$.
        So, by Def.~\ref{def:binunf}
        $(u\theta,\emptyseq\theta) \in \binunf(P)$
        with $\emptyseq\theta = \emptyseq$ and
        \begin{align*}
            u\theta = c(c_1(x_1),\dots,c_m(x_m))\theta 
            &= c(c_1(x_1\theta),\dots,c_m(x_m\theta)) \\
            &= c(c^{n+1}_1(t_1\gamma),\dots,
            c^{n+1}_m(t_m\gamma)) \\
            &= c(c^{n+1}_1(t_1),\dots,
            c^{n+1}_m(t_m))\gamma \\
            &= (v\sigma^{n+1}\mu)\gamma = p(n+1)\gamma
        \end{align*}
        Hence, $(p(n+1)\gamma,\emptyseq)
        \in \binunf(P)$, \ie
        $(p(n+1),\emptyseq) \in \binunf(P)$
        (because $\binunf(P)$ is closed by renaming).
    \end{itemize}
    So, for all $n \in \nat$,
    $(p(n),\qe(n)) \in \binunf(P)$, \ie
    $(p,\qe)$ is correct \wrt{} $P$.
\end{proof}

\begin{proof}[Proof that $(q,\pt{v})$ is correct \wrt{} $P$]
    We have $q = \patterm{u}{\sigma}{\idsub}$ where
    \begin{align*}
        u &= c(c_1(x_1),\dots,c_m(x_m))
        \text{ with $\vars(c) = 
        \vars(c_1,\dots,c_m) = \emptyset$} \\
        \sigma &= \{x_k \mapsto c_k(x_k)
        \mid 1 \leq k \leq m,\
        c_k(x_k) \neq x_k \}
    \end{align*}

    We prove by induction on $n$ that
    $(q(n),\pt{v}(n)) \in \binunf(P)$ for all
    $n \in \nat$, where
    $(q(n),\pt{v}(n)) = (q(n),v)$.
    \begin{itemize}
        \item (Base: $n = 0$) Here, 
        $q(n) = u\sigma^n = u$, so 
        $(q(n),v) = (u,v) = r$.
        By Def.~\ref{def:binunf},
        $r \in \binunf(P)$ (obtained by 
        unfolding the right-hand side of
        $r$ using $\id$). So,
        $(q(n),v) \in \binunf(P)$.
        \item (Induction)
        Suppose that for some $n \in \nat$
        we have $(q(n),v) \in \binunf(P)$.
        Note that
        \begin{align*}
            q(n) &= u\sigma^n = 
            c(c_1^{n + 1}(x_1),\dots,
            c_m^{n + 1}(x_m))
        \end{align*}
        
        Let us apply Def.~\ref{def:binunf} to
        unfold the right-hand side of $r = (u,v)$
        using $(q(n),v)$. Let $\gamma$
        be a renaming such that 
        $(q(n)\gamma,v\gamma)$ is variable
        disjoint from $r$. Then, we have
        $(q(n)\gamma,v\gamma) \ll_{r} \binunf(P)$.
        Note that $\vars(c) = \emptyset$. Moreover,
        \begin{align*}
            q(n)\gamma &= c(c_1^{n + 1}(x_1\gamma),
            \dots, c_m^{n + 1}(x_m\gamma))
            \quad\text{and} \\
            v &= c(x_1,\dots,x_m)
        \end{align*}
        are variable disjoint, so $x_1,\dots,x_m$ do
        not occur in $q(n)\gamma$.
        So, the Martelli-Montanari unification
        algorithm (see, \eg Sect.~2.6 of the
        book by ~\cite{apt97})
        applied to $q(n)\gamma$ and $v$ returns the mgu
        $\theta = \left\{x_k \mapsto
        c_k^{n + 1}(x_k\gamma)
        \;\middle\vert\; 1 \leq k \leq m
        \right\}$.
        By Def.~\ref{def:binunf}
        $(u\theta,v\gamma\theta) \in \binunf(P)$.
        We have
        \begin{align*}
            u\theta =
            c(c_1(x_1),\dots,c_m(x_m))\theta
            &=
            c(c_1(x_1\theta),\dots,c_m(x_m\theta)) \\
            &= c(c_1^{n + 2}(x_1\gamma),\dots,c_m^{n + 2}(x_m\gamma)) \\
            &= c(c_1^{n + 2}(x_1),\dots,c_m^{n + 2}(x_m))\gamma \\
            &= (u\sigma^{n + 1})\gamma
            = q(n + 1)\gamma
        \end{align*}
        Moreover, $v\gamma\theta = v\gamma$.
        Indeed, $\vars(v\gamma) =
        \{x_1\gamma,\dots,x_m\gamma\}$
        and $\dom(\theta) = \{x_1,\dots,x_m\}$,
        with
        $\{x_1,\dots,x_m\} \cap
        \{x_1\gamma,\dots,x_m\gamma\} = \emptyset$
        (because $x_1,\dots,x_m$ do not occur in
        $q(n)\gamma$).
        Hence, $(u\theta,v\gamma\theta) =
        (q(n+1)\gamma, v\gamma) \in \binunf(P)$,
        \ie
        $(q(n+1), v) \in \binunf(P)$
        (because $\binunf(P)$ is closed by 
        renaming).
    \end{itemize}
    So, for all $n \in \nat$,
    $(q(n), v) \in \binunf(P)$, \ie
    $(q, \pt{v})$ is correct \wrt{} $P$.
\end{proof}

\subsection{Proof of Theorem~\ref{theo:soundness}
(Section~\ref{sect:patterns})}

First, we establish the following result.
\begin{lemma}\label{lem:correct_binunf}
    Let $P$ be a program and $r$ be
    a pattern rule. If $r$ is correct
    \wrt{} $P$ then $[r]$ also is.
\end{lemma}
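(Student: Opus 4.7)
The plan is to unfold the definition of $[r]$, show that $\binunf(P)$ itself is closed under variable renaming, and then chain a short sequence of set inclusions.

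First I would fix an arbitrary pattern rule $r' \in [r]$. By Def.~\ref{def:pat_equiv_class}, this means $\rules(r') \subseteq [\rules(r)]$, where on the right the brackets denote the ordinary renaming closure of a set of (binary) rules as defined in Sect.~\ref{sect:lp}. Since $r$ is assumed correct w.r.t. $P$, we also have $\rules(r) \subseteq \binunf(P)$, and applying the renaming closure to both sides yields $[\rules(r)] \subseteq [\binunf(P)]$.

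The key auxiliary fact I need is $[\binunf(P)] = \binunf(P)$, i.e. $\binunf(P)$ is closed under renaming of binary rules. This is immediate from Def.~\ref{def:binunf}: every output of $T_P^{\beta}$ is already written inside $[\,\cdot\,]$ on both clauses of the union, so $T_P^{\beta}(U) = [T_P^{\beta}(U)]$ for every $U$; taking the least fixed point preserves this property, so $\binunf(P)$ is a union of equivalence classes modulo renaming. Combining this with the inclusion of the previous paragraph gives $\rules(r') \subseteq \binunf(P)$, so $r'$ is correct w.r.t. $P$. As $r'$ was arbitrary, $[r]$ is correct w.r.t. $P$, which is exactly what was to be shown.

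There is no real obstacle here beyond making the two uses of the bracket notation explicit. The one point that deserves a careful sentence is the renaming-closure of $\binunf(P)$: one should note that if each stage $(T_P^{\beta})^n(\emptyset)$ is a union of renaming classes, then so is their union, which is $\binunf(P)$. Once that is in place, the lemma is a two-line chain of inclusions.
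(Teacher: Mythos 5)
Your proof is correct and follows essentially the same route as the paper: unfold Def.~\ref{def:pat_equiv_class} to get $\rules(r') \subseteq [\rules(r)]$, use $\rules(r) \subseteq \binunf(P)$, and conclude via the fact that $\binunf(P)$ is closed under renaming (which the paper also invokes, citing Def.~\ref{def:binunf}, though you justify it in slightly more detail by noting each stage of the fixed-point iteration is a union of renaming classes).
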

\begin{proof}
    Suppose that $r$ is correct \wrt{} $P$.
    By Def.~\ref{def:correct}, we have to prove
    that every pattern rule in $[r]$ also is.
    Let $r' \in [r]$. Then, 
    by Def.~\ref{def:pat_equiv_class}, we have 
    $rules(r') \subseteq [\rules(r)]$. As $r$
    is correct \wrt{} $P$, by Def.~\ref{def:correct}
    we have $\rules(r) \subseteq \binunf(P)$. But,
    by Def.~\ref{def:binunf}, $\binunf(P)$ is
    closed by renaming, \ie
    $[\rules(r)] \subseteq \binunf(P)$.
    Therefore, we have
    $rules(r') \subseteq \binunf(P)$,
    \ie $r'$ is correct \wrt{} $P$.
\end{proof}

Then, using Lem.~\ref{lem:correct_binunf},
we prove that the unfolding operator
$T_{P,B}^{\pi}$ is sound.
\begin{proposition}\label{prop:soundness}
    Let $P$ be a program and
    $B,U \subseteq \patruleset$.
    If $B \cup U$ is correct \wrt{} $P$
    then $T_{P,B}^{\pi}(U)$ also is.
\end{proposition}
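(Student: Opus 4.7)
The plan is to decompose $T_{P,B}^{\pi}(U)$ per Def.~\ref{def:patunf} into $[B]$ and the equivalence class of the set $S$ of pattern rules produced by pattern unfolding, and prove correctness of each part separately. The first part is immediate: $B$ is correct as a subset of the correct set $B \cup U$, and Lem.~\ref{lem:correct_binunf} applied pointwise over $B$ yields correctness of $[B]$. By the same lemma, for the second part it suffices to show that every pattern rule $r' = (p',q')$ with $p' = \patterm{u}{\sigma}{\mu}$ and $q' = \patterm{v}{\sigma_i\sigma}{\mu_i\mu}$ in $S$ is itself correct.

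I fix an arbitrary $n \in \nat$ and aim to show $(p'(n), q'(n)) \in \binunf(P)$. By Def.~\ref{def:pattern-term}, $p'(n) = u\sigma^n\mu$ and $q'(n) = v(\sigma_i\sigma)^n(\mu_i\mu)$. The commutativity hypotheses let me rearrange the latter: an easy induction on $n$ using $\sigma\sigma_i = \sigma_i\sigma$ yields $(\sigma_i\sigma)^n = \sigma_i^n\sigma^n$, and another one using $\sigma\mu_i = \mu_i\sigma$ gives $\sigma^n\mu_i = \mu_i\sigma^n$. Associativity of substitution composition then turns $q'(n)$ into $(v\sigma_i^n\mu_i)(\sigma^n\mu)$.

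The heart of the argument is to produce $(p'(n), q'(n))$ as a single binary unfolding step of $r = (u, \sequence{v_1, \ldots, v_m})$ at position $i$, in the sense of Def.~\ref{def:binunf}. The supporting binary rules are the instances at $n$ of the pattern rules appearing in the defining sequence of Def.~\ref{def:patunf}: namely $(p_k(n), \emptyseq)$ for $k < i$ and $(p_i(n), v\sigma_i^n\mu_i)$ for position $i$. For $k < i$ the pattern rule $(p_k, \qe)$ must come from $U$ (not from $\patid$, whose elements have identical left and right components), so correctness of $B \cup U$ places $(p_k(n), \emptyseq)$ in $\binunf(P)$. The rule $(p_i, \patterm{v}{\sigma_i}{\mu_i})$ lies in $U \cup \patid$: if in $U$, correctness gives $(p_i(n), v\sigma_i^n\mu_i) \in \binunf(P)$; if in $\patid$, then $\sigma_i = \mu_i = \idsub$ and $(p_i(n), v)$ is an element of $\id$. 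Either way the rule is eligible, since Def.~\ref{def:binunf} draws from $U \cup \id$ (here with $U$ instantiated as $\binunf(P)$). The mgu hypothesis unpacks through Def.~\ref{def:pattern-unif} as $\sigma^n\mu \in \mgu(\sequence{p_1(n), \ldots, p_i(n)}, \sequence{v_1, \ldots, v_i})$, and variable disjointness of the instances at $n$ is inherited from the $\ll_r$ disjointness of the pattern rules (since $\vars(p_k(n)) \subseteq \vars(p_k)$) together with the closure of $\binunf(P)$ under renaming. Applying Def.~\ref{def:binunf} with these ingredients produces exactly $(u\sigma^n\mu, (v\sigma_i^n\mu_i)(\sigma^n\mu)) = (p'(n), q'(n)) \in \binunf(P)$.

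The main obstacle will be the substitution bookkeeping: checking that the commutativity hypotheses interact correctly with iteration so that $q'(n)$ factors as $(v\sigma_i^n\mu_i)(\sigma^n\mu)$, and matching this factorization precisely to the shape required by Def.~\ref{def:binunf} at position $i$. The case analysis on whether $(p_i, \patterm{v}{\sigma_i}{\mu_i})$ comes from $U$ or from $\patid$, and the variable-disjointness check, are then straightforward.
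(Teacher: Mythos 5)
Your proposal is correct and follows essentially the same route as the paper's proof: split $T_{P,B}^{\pi}(U)$ into $[B]$ and the equivalence classes of the unfolded rules, reduce via Lem.~\ref{lem:correct_binunf} to correctness of the representatives, and for each $n$ realise $(p'(n),q'(n))$ as a single $T_P^{\beta}$-step using the instantiated side rules $(p_k(n),\emptyseq)$ and $(p_i(n),v\sigma_i^n\mu_i)$, the unpacked mgu $\sigma^n\mu$, and commutativity to factor $q'(n)=(v\sigma_i^n\mu_i)(\sigma^n\mu)$. The only detail you leave implicit, which the paper states explicitly, is that $v\sigma_i^n\mu_i\neq\emptyseq$ when $i<m$ (immediate since $\sigma_i,\mu_i\in\subsset$).
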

\begin{proof}
    Suppose that $B \cup U$ is correct
    \wrt{} $P$, which implies that $B$
    and $U$ are correct \wrt{} $P$
    (see Def.~\ref{def:correct}).
    Let $r \in T_{P,B}^{\pi}(U)$.
    By Def.~\ref{def:patunf}, we have
    $r \in [r']$ where $r'=(p,q)$ is a pattern
    rule which has two possible origins.
    \begin{itemize}
        \item Either $r' \in B$. Then, as $B$
        is correct \wrt{} $P$, $r'$ also is
        (Def.~\ref{def:correct}).
        \item Or $r'$ is constructed by unfolding
        a prefix of the right-hand side of a rule
        \[r_1 = (u,\sequence{v_1,\dots,v_m}) \in P\]
        Then, $p = \patterm{u}{\sigma}{\mu}$
        and $q = \patterm{v}{\sigma_i\sigma}{\mu_i\mu}$
        where
        \begin{itemize}
            \item $1 \leq i \leq m$,
            \item $\big\langle(p_1,\qe),\dots,(p_{i-1},\qe),
            (p_i,\patterm{v}{\sigma_i}{\mu_i})\big\rangle
            \ll_{r_1} U \cup \patid$,
            \item $\patsub{\sigma}{\mu} \in
            \mgu\left(\sequence{p_1,\dots,p_i},
            \sequence{\pt{v_1},\dots,\pt{v_i}}\right)$,
            \item $\sigma$ commutes with $\sigma_i$ and
            $\mu_i$,
            \item if $i < m$ then $v \neq \emptyseq$.
        \end{itemize}
        Let $n \in \nat$.
        \begin{itemize}
            \item For all $1 \leq j < i$, we
            necessarily have 
            $(p_j,\qe) \in U$
            (because no pattern rule from
            $\patid$ has $\qe$ as right-hand side)
            so, as $U$ is correct \wrt{} $P$, we have
            $(p_j(n),\emptyseq) \in \binunf(P)$.
            \item The rule
            $(p_i,\patterm{v}{\sigma_i}{\mu_i})$
            belongs to $U \cup \patid$.
            If it belongs to $U$ then, as $U$ is correct
            \wrt{} $P$, we have
            $(p_i(n),v\sigma^n_i\mu_i) \in \binunf(P)$.
            If it belongs to $\patid$ then 
            $(p_i(n),v\sigma^n_i\mu_i) \in \id$.
        \end{itemize}
        Moreover, the rules $(p_1(n),\emptyseq),\dots,
        (p_{i-1}(n),\emptyseq),
        (p_i(n),v\sigma^n_i\mu_i)$ and $r_1$
        are variable disjoint from 
        each other. Hence, we have
        \[\sequence{(p_1(n),\emptyseq),\dots,
        (p_{i-1}(n),\emptyseq),
        (p_i(n),v\sigma^n_i\mu_i)}
        \ll_{r_1} \binunf(P) \cup \id\]
        with
        $v\sigma^n_i\mu_i \neq \emptyseq$
        if $i < m$ (because $\sigma_i,\mu_i \in \subsset$).
        On the other hand, as
        $\patsub{\sigma}{\mu} \in
        \mgu\left(\sequence{p_1,\dots,p_i},
        \sequence{\pt{v_1},\dots,\pt{v_i}}\right)$, by
        Def.~\ref{def:pattern-unif} we have
        \[(\patsub{\sigma}{\mu})(n) = \sigma^n\mu \in
        \mgu\left(\sequence{p_1(n),\dots,p_i(n)},
        \sequence{(\pt{v_1})(n),\dots,(\pt{v_i})(n)}\right)\]
        where
        $\sequence{(\pt{v_1})(n),\dots,(\pt{v_i})(n)} =
        \sequence{v_1,\dots,v_i}$.
        So, by Def.~\ref{def:binunf},
        we have $(u\sigma^n\mu,
        (v\sigma^n_i\mu_i)\sigma^n\mu)
        \in \binunf(P)$. But $u\sigma^n\mu = p(n)$
        and, as $\sigma$ commutes with
        $\sigma_i$ and $\mu_i$,
        $(v\sigma^n_i\mu_i)\sigma^n\mu =
        v(\sigma_i\sigma)^n(\mu_i\mu) = q(n)$.
        Therefore, we have proved that
        $(p(n),q(n)) \in \binunf(P)$, \ie
        as $n$ is arbitrary, that
        $\rules(r') \subseteq \binunf(P)$, \ie
        $r'$ is correct \wrt{} $P$
        (Def.~\ref{def:correct}).
    \end{itemize}
    Hence, we have proved that $r'$ is correct
    \wrt{} $P$, whatever its origin.
    So, by Lem.~\ref{lem:correct_binunf},
    $[r']$ is correct \wrt{} $P$.
    As $r \in [r']$, this implies that
    $r$ also is (Def.~\ref{def:correct}).
    
    Finally, as $r$ denotes an arbitrary
    element of $T_{P,B}^{\pi}(U)$, by
    Def.~\ref{def:correct} we have proved
    that $T_{P,B}^{\pi}(U)$ is correct
    \wrt{} $P$.
\end{proof}

Finally, using Prop.~\ref{prop:soundness},
we can now prove
Thm.~\ref{theo:soundness}.
\begin{proof}[Proof of Theorem~\ref{theo:soundness}]
    We prove that for all $n \in \nat$,
    $(T_{P,B}^{\pi})^n(\emptyset)$
    is correct \wrt{} $P$.
    We proceed by induction on $n$.
    \begin{itemize}
        \item (Base: $n = 0$) Here, 
        $(T_{P,B}^{\pi})^n(\emptyset) =
        \emptyset$ and it is vacuously true
        that $\emptyset$ is correct \wrt{} $P$.
        \item (Induction) 
        If, for some $n \in \nat$,
        $(T_{P,B}^{\pi})^n(\emptyset)$
        is correct \wrt{} $P$
        then, by Prop.~\ref{prop:soundness},
        $T_{P,B}^{\pi}((T_{P,B}^{\pi})^n(\emptyset))
        = (T_{P,B}^{\pi})^{n+1}(\emptyset)$
        is also correct \wrt{} $P$.
    \end{itemize}
    Consequently, for all $n \in \nat$,
    $(T_{P,B}^{\pi})^n(\emptyset)$
    is correct \wrt{} $P$,
    which implies that $\patunf(P,B)$
    is correct \wrt{} $P$.
\end{proof}

\subsection{Proof of Theorem~\ref{theo:detection-nonterm}
(Section~\ref{sect:patterns})}
First, we establish the following result.
\begin{lemma}\label{lem:detection-nonterm-aux}
    Let $p = \patterm{u}{\sigma}{\mu}$ and
    $q = \patterm{u\sigma^a}{\sigma^b\sigma'}{\mu\mu'}$
    be two pattern terms such that $\sigma'$ commutes
    with $\sigma$ and $\mu$.
    Then, for all $n \in \nat$ there exists 
    $\eta \in \subsset$ such that
    $q(n) = p(a + n\times b)\eta$.
\end{lemma}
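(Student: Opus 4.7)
The plan is to compute both sides directly and exploit the commutation hypotheses to rearrange the compositions until the two expressions match up to a right factor, which will serve as $\eta$.

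First, unfolding the definitions, we have $p(a + n\times b) = u\sigma^{a+nb}\mu$ and $q(n) = u\sigma^a(\sigma^b\sigma')^n\mu\mu'$. The target is thus to massage $q(n)$ into the form $u\sigma^{a+nb}\mu \cdot \eta$ for a suitable $\eta \in \subsset$.

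The key preliminary observation is that the commutation hypothesis ``$x\sigma\sigma' = x\sigma'\sigma$ for all $x \in X$'' (Sect.~\ref{sect:term-subs}) lifts from variables to arbitrary compositions: since a substitution is determined by its action on variables, we get the equality of substitutions $\sigma\sigma' = \sigma'\sigma$, and then by a straightforward induction on $k$, $\sigma^k\sigma' = \sigma'\sigma^k$. The same reasoning applied to the commutation of $\sigma'$ with $\mu$ yields $\mu\sigma' = \sigma'\mu$ and, by induction on $k$, $\mu(\sigma')^k = (\sigma')^k\mu$. From the first point, a short induction on $n$ also gives $(\sigma^b\sigma')^n = \sigma^{bn}(\sigma')^n$ (each ``passage'' of a $\sigma'$ leftward through a $\sigma^b$ uses the previous identity).

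With these tools in hand, the computation is immediate: using associativity of composition together with the identities above,
\begin{align*}
    q(n) &= u\sigma^a(\sigma^b\sigma')^n\mu\mu'
    = u\sigma^a\sigma^{bn}(\sigma')^n\mu\mu'
    = u\sigma^{a+bn}(\sigma')^n\mu\mu' \\
    &= u\sigma^{a+bn}\mu(\sigma')^n\mu'
    = p(a + n\times b)\,(\sigma')^n\mu'.
\end{align*}
Setting $\eta = (\sigma')^n\mu' \in \subsset$ closes the proof.

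There is no real obstacle here; the only delicate point is making sure that the commutation condition, which is stated variable-wise, actually licenses the reordering of the iterated compositions $\sigma^{bn}$ and $(\sigma')^n$ with $\mu$ and with one another. Once this extension is justified once and for all, the rest is a mechanical chain of substitution-composition rewrites.
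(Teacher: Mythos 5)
Your proof is correct and follows essentially the same route as the paper's: both compute $q(n) = u\sigma^a(\sigma^b\sigma')^n\mu\mu'$, use the commutation of $\sigma'$ with $\sigma$ and $\mu$ to rewrite this as $u\sigma^{a+n\times b}\mu(\sigma')^n\mu'$, and take $\eta = (\sigma')^n\mu'$. The only difference is that you spell out the inductive justification of the reordering identities (e.g.\ $(\sigma^b\sigma')^n = \sigma^{bn}(\sigma')^n$) that the paper compresses into the single annotation ``because $\sigma'$ commutes with $\sigma$ and $\mu$'', which is a welcome precision rather than a deviation.
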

\begin{proof}
    For all $n \in \nat$ we have:
    \begin{align*}
        q(n) &= u\sigma^a(\sigma^b\sigma')^n\mu\mu' 
        & &\text{(by Def.~\ref{def:pattern-term})}\\
        &= (u\sigma^{a + n\times b}\mu)(\sigma')^n\mu'
        & &\text{(because $\sigma'$ commutes with $\sigma$ and $\mu$)}\\
        &= p(a + n\times b)(\sigma')^n\mu'
        & &\text{(by Def.~\ref{def:pattern-term})}
    \end{align*}
\end{proof}

Using Lem.~\ref{lem:detection-nonterm-aux},
we establish the next proposition.

\newpage
\begin{proposition}\label{prop:detection-nonterm-aux}
    Let $P$ be a program and $B \subseteq \patruleset$
    be correct \wrt{} $P$. Suppose that $\patunf(P,B)$
    contains a pattern rule of the form
    $(\patterm{u}{\sigma}{\mu},
    \patterm{u\sigma^a}{\sigma^b\sigma'}{\mu\mu'})$
    where $\sigma'$ commutes with $\sigma$
    and $\mu$.
    Then, for all $n \in \nat$ and all
    $\theta \in \subsset$, there exists
    $\eta \in \subsset$ such that
    $p(a + n \times b)\eta \in \calls_P(p(n)\theta)$.
\end{proposition}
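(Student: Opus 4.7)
The plan is to reduce the statement to a short composition of three results already available at this point in the paper: the Soundness Theorem (Thm.~\ref{theo:soundness}), the relation between the binary unfolding and the calls of the program (Prop.~\ref{prop:binunf_calls}), and the algebraic identity $q(n) = p(a + n\times b)(\sigma')^n\mu'$ established in Lem.~\ref{lem:detection-nonterm-aux}.

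First, I would set up the notation by letting $p = \patterm{u}{\sigma}{\mu}$ and $q = \patterm{u\sigma^a}{\sigma^b\sigma'}{\mu\mu'}$, so that the hypothesis becomes $(p,q) \in \patunf(P,B)$. Since $B$ is correct \wrt{} $P$ by assumption, Thm.~\ref{theo:soundness} yields that $\patunf(P,B)$ is correct \wrt{} $P$, hence so is $(p,q)$. By Def.~\ref{def:correct}, $\rules((p,q)) \subseteq \binunf(P)$, which means that for every $n \in \nat$ the binary rule $(p(n),q(n))$ belongs to $\binunf(P)$.

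Next, I would fix $n \in \nat$ and $\theta \in \subsset$, and apply Prop.~\ref{prop:binunf_calls} to the binary rule $(p(n),q(n)) \in \binunf(P)$ with the substitution $\theta$ playing the role of $\sigma$ there. This yields some $\theta' \in \subsset$ such that $q(n)\theta' \in \calls_P(p(n)\theta)$. Finally, Lem.~\ref{lem:detection-nonterm-aux} provides an $\eta_0 \in \subsset$ (concretely $\eta_0 = (\sigma')^n\mu'$) with $q(n) = p(a + n\times b)\eta_0$; composing gives $q(n)\theta' = p(a + n\times b)\eta_0\theta'$, so setting $\eta = \eta_0\theta'$ delivers $p(a + n\times b)\eta \in \calls_P(p(n)\theta)$, as required.

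There is no real obstacle: the commutation hypothesis on $\sigma'$ is not used directly in this proof but only indirectly, through its role in making Lem.~\ref{lem:detection-nonterm-aux} applicable. The only thing to be careful about is matching the quantifier structure of Prop.~\ref{prop:binunf_calls} (``for all $\sigma$, there exists $\theta$'') with the quantifier we need here (``for all $\theta$, there exists $\eta$''), which is why we instantiate its $\sigma$ with the given $\theta$ and then absorb $\theta'$ into the final $\eta$.
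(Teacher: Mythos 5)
Your proof is correct and follows essentially the same route as the paper's: invoke Thm.~\ref{theo:soundness} to get correctness of $(p,q)$, hence $(p(n),q(n)) \in \binunf(P)$; apply Prop.~\ref{prop:binunf_calls} to obtain $\theta'$ with $q(n)\theta' \in \calls_P(p(n)\theta)$; and use Lem.~\ref{lem:detection-nonterm-aux} to rewrite $q(n)$ as $p(a + n\times b)\eta'$, taking $\eta = \eta'\theta'$. Your remark about how the quantifiers of Prop.~\ref{prop:binunf_calls} are matched is exactly the point the paper handles implicitly.
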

\begin{proof}
    Let $n \in \nat$. 
    By Thm.~\ref{theo:soundness}, $\patunf(P,B)$ is
    correct \wrt{} $P$. Consequently, as
    $r \in \patunf(P,B)$, we have that $r$ is correct
    \wrt{} $P$, \ie $(p(n),q(n)) \in \binunf(P)$
    (Def.~\ref{def:correct}).
    
    Let $\theta \in \subsset$.
    By Prop.~\ref{prop:binunf_calls},
    there exists $\theta' \in \subsset$ such that
    $q(n)\theta' \in \calls_P(p(n)\theta)$.
    Moreover, by Lem.~\ref{lem:detection-nonterm-aux},
    there exists $\eta' \in \subsset$
    such that $q(n) = p(a + n \times b)\eta'$.
    So, we have
    $p(a + n \times b)\eta'\theta' = q(n)\theta'
    \in \calls_P(p(n)\theta)$.
\end{proof}

Finally, using Prop.~\ref{prop:detection-nonterm-aux},
we can now prove
Thm.~\ref{theo:detection-nonterm}.
\begin{proof}[Proof of Theorem~\ref{theo:detection-nonterm}]
    Let $p = \patterm{u}{\sigma}{\mu}$ and
    $q = \patterm{u\sigma^a}{\sigma^b\sigma'}{\mu\mu'}$.
    
    Let $n \in \nat$ and $\theta \in \subsset$.
    By Prop.~\ref{prop:detection-nonterm-aux},
    there exists $\eta_1 \in \subsset$
    such that $p(a + n \times b)\eta_1 \in \calls_P(p(n)\theta)$,
    \ie $\sequence{p(n)\theta} \mathop{\rra_P^+}
    \sequence{p(a + n \times b)\eta_1, \dots}$.
    
    Again by Prop.~\ref{prop:detection-nonterm-aux},
    there exists $\eta_2 \in \subsset$
    such that $p(a + (a + n\times b)\times b)\eta_2 \in
    \calls_P(p(a + n \times b)\eta_1)$,
    \ie $\sequence{p(a + n \times b)\eta_1} \mathop{\rra_P^+}
    \sequence{p(a + (a + n\times b)\times b)\eta_2, \dots}$.
    
    By iterating this process, we obtain an infinite
    $\rra_P$-chain: 
    \[\sequence{p(n)\theta} \mathop{\rra_P^+}
    \sequence{p(a + n\times b)\eta_1, \dots}
    \mathop{\rra_P^+}
    \sequence{p(a + (a + n\times b)\times b)\eta_2, \dots}
    \rra_P^+ \cdots\]
\end{proof}

\subsection{Proof of Lemma~\ref{lem:simple-trick}
(Section~\ref{sect:simple-patterns})}
Let $c \in \scontext$,
$a,b,a',b' \in \nat$ and
$u \in \stermset$.
Let $n \in \nat$. We have:
\begin{align*}
    \big(c^{a,b}(c^{a',b'}(u))\big)(n) &=
    c^{a \times n + b}(c^{a' \times n + b'}(u(n))) \\
    &= c^{(a+a') \times n + (b+b')}(u(n)) \\
    &= \big(c^{a+a',b+b'}(u)\big)(n)
\end{align*}
and
\begin{align*}
    \big(c(u)\big)(n) &= c^{0 \times n + 1}(u(n))
    = \big(c^{0,1}(u)\big)(n)
\end{align*}

\subsection{Proof of Lemma~\ref{lem:simple-pattern-term}
(Section~\ref{sect:simple-patterns})}
First, we establish the following result.
\begin{lemma}\label{lem:simple-pattern-term-aux}
    For all $s \in \stermset$,
    $\theta \in \ssubsset$ and
    $n \in \nat$ we have
    $(s\theta)(n) = s(n)\theta(n)$.
\end{lemma}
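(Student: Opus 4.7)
The plan is to prove this by structural induction on $s \in \stermset$, pushing the outer operator $(\cdot)(n)$ and the action of $\theta(n)$ through the term constructors.

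First I would handle the variable case $s = x \in X$. Here $s\theta = \theta(x)$, so $(s\theta)(n) = (\theta(x))(n)$. On the other side, $s(n) = x$ since $x$ contains no symbol from $\Upsilon$, hence $s(n)\theta(n) = \theta(n)(x)$. The definition of $\theta(n)$ given just before the lemma states exactly that $(\theta(n))(x) = (\theta(x))(n)$, closing this case.

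Next I would treat the two inductive cases depending on whether the root of $s$ lies in $\Sigma$ or $\Upsilon$. If $s = \fsym(s_1,\dots,s_m)$ with $\fsym \in \Sigma^{(m)}$, then $\fsym \notin \Upsilon$ so the rewriting of $\Upsilon$-symbols in the definition of $(\cdot)(n)$ does not touch the root: $(s\theta)(n) = \fsym((s_1\theta)(n),\dots,(s_m\theta)(n))$, and likewise $s(n)\theta(n) = \fsym(s_1(n)\theta(n),\dots,s_m(n)\theta(n))$. The induction hypothesis applied to each $s_i$ makes these equal. If instead $s = c^{a,b}(s_1)$ with $c^{a,b} \in \Upsilon$, then $(s\theta)(n)$ first rewrites $c^{a,b}$ to the iterated context $c^{a\times n + b}$, giving $(s\theta)(n) = c^{a\times n + b}((s_1\theta)(n))$, whereas $s(n)\theta(n) = (c^{a\times n + b}(s_1(n)))\theta(n)$. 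The induction hypothesis gives $(s_1\theta)(n) = s_1(n)\theta(n)$, so it remains to compare $c^{a\times n + b}(t)$ with $(c^{a\times n + b}(t))\theta(n)$ for $t = s_1(n)\theta(n)$, which reduces to checking that substitution commutes with embedding into the context $c^{a \times n + b}$.

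The main obstacle is precisely this last point: I expect to use the fact that $c \in \scontext$, which by definition means $\vars(c) = \emptyset$. Hence every iterate $c^{k}$ is also variable-free, so applying any substitution to $c^{k}(t)$ only acts on the hole-filler $t$. This could be formalized by a short auxiliary induction on $k$, using $c^{0} = \square_1$ and $c^{k+1} = c(c^{k})$. Once this commutation is established, the chain of equalities closes case 3, completing the induction and hence the lemma.
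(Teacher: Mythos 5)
Your proof is correct and reaches the same conclusion as the paper's, but it is organised differently: the paper gives a direct, global argument --- $(s\theta)(n)$ is obtained by performing the $\Upsilon$-replacement throughout $s\theta$, and this replacement can be split into the part acting on $s$ (yielding $s(n)$) and the part acting on each $\theta(x)$ (yielding $\theta(n)$) --- whereas you formalise the same idea as a structural induction on $s$. Your version is more rigorous on the one point the paper glosses over: in the case $s = c^{a,b}(s_1)$, the equality needs substitution to commute with plugging into the iterated context $c^{a\times n+b}$, which holds precisely because $c \in \scontext$ forces $\vars(c) = \emptyset$ (and hence $\vars(c^k) = \emptyset$ for every $k$). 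If $c$ could contain variables, the replacement would introduce fresh variable occurrences on which $\theta(n)$ would act on one side but not the other, so isolating this hypothesis is genuinely worthwhile; the paper's one-line decomposition silently relies on it. One small slip: in that case you write that it remains to compare $c^{a\times n+b}(t)$ with $\bigl(c^{a\times n+b}(t)\bigr)\theta(n)$ for $t = s_1(n)\theta(n)$, but the right-hand side should be $\bigl(c^{a\times n+b}(s_1(n))\bigr)\theta(n)$; the commutation fact you invoke resolves the correct comparison, so this is only a mis-statement, not a gap.
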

\begin{proof}
    Let $s \in \stermset$,
    $\theta \in \ssubsset$ and
    $n \in \nat$.
    By definition, $(s\theta)(n)$ is obtained from
    $s\theta$ by replacing every
    $c^{a,b} \in \Upsilon$ by
    $c^{a \times n + b}$.
    So, $(s\theta)(n) = s(n)\eta$ where, for all
    $x \in X$, $\eta(x) = (\theta(x))(n)$, \ie
    $\eta(x) = (\theta(n))(x)$ by definition of 
    $\theta(n)$. Hence, $s(n)\eta = s(n)\theta(n)$.
    Consequently, we have proved that
    $(s\theta)(n) = s(n)\theta(n)$.
\end{proof}

Using Lem.~\ref{lem:simple-pattern-term-aux}
we can now prove Lem.~\ref{lem:simple-pattern-term}.
\begin{proof}[Proof of Lemma~\ref{lem:simple-pattern-term}]
    Let $p = \patterm{s}{\sigma}{\mu}$ be a
    simple pattern term, $u \in \upsilon(p)$
    and $n \in \nat$.
    By Def.~\ref{def:simple-pattern-term},
    $u \in [s\theta_p]$ where, for all $x \in \vars(s)$,
    $\theta_p(x) = \mu(x)$ if $\sigma(x) = x$ and
    $\theta_p(x) = c^{a,b}(t)$ if
    $\sigma(x) = c^a(x) \neq x$ and
    $\mu(x) = c^b(t)$.

    Let $x \in \vars(s)$.
    By Def.~\ref{def:simple-pattern-term},
    $\sigma(x) = c^a(x)$ and $\mu(x) = c^b(t)$
    for some $c \in \scontext$,
    some $a,b \in \nat$ and some $t \in \termset$.
    As $\vars(c) = \emptyset$, we have
    $(\sigma^n\mu)(x) = c^{a \times n}(\mu(x))
    = c^{a \times n}(c^b(t)) 
    = c^{a \times n + b}(t)$.
    \begin{itemize}
        \item Suppose that $\sigma(x) = x$.
        Then, $\theta_p(x) = \mu(x)$.
        Hence, $(\theta_p(n))(x) = (\theta_p(x))(n) =
        (\mu(x))(n)$ with $(\mu(x))(n) = \mu(x)$
        because $\mu(x)$ contains no symbol of
        $\Upsilon$. So, $(\theta_p(n))(x) = \mu(x)$.
        Moreover, $\sigma(x) = x$ implies that
        $c = \square_1$ or $a = 0$, hence
        $\mu(x) = 
        c^{a \times n}(\mu(x)) = (\sigma^n\mu)(x)$.
        So, we have 
        $(\theta_p(n))(x) = (\sigma^n\mu)(x)$.
        \item Suppose that $\sigma(x) \neq x$.
        Then, $\theta_p(x) = c^{a,b}(t)$.
        Hence, $(\theta_p(n))(x) = (\theta_p(x))(n)
        = (c^{a,b}(t))(n)$
        with $(c^{a,b}(t))(n)
        = c^{a \times n + b}(t)$
        because $t$ contains no symbol of
        $\Upsilon$. So, $(\theta_p(n))(x) =
        c^{a \times n + b}(t) =
        (\sigma^n\mu)(x)$.
    \end{itemize}
    Therefore, we always have 
    $(\theta_p(n))(x) = (\sigma^n\mu)(x)$.
    Then, as $x$ denotes an arbitrary element of
    $\vars(s)$, we have $s(\theta_p(n)) = s\sigma^n\mu$,
    where $s\sigma^n\mu = p(n)$ (Def.~\ref{def:pattern-term}).

    Moreover, as $s$ contains no symbol from $\Upsilon$
    we have $s = s(n)$, so $s(\theta_p(n)) = s(n)(\theta_p(n))$.
    But, by Lem.~\ref{lem:simple-pattern-term-aux},
    $s(n)\theta_p(n) = (s\theta_p)(n)$ with
    $(s\theta_p)(n) = u(n)$ because
    $u \in [s\theta_p]$.
    Hence, we have $s(\theta_p(n)) = u(n)$.
    
    Finally, we have $p(n) = s(\theta_p(n)) = u(n)$.
\end{proof}

\subsection{Proof of Lemma~\ref{lem:simple-pattern-subs}
(Section~\ref{sect:simple-patterns})}
Let $\theta$ be a simple substitution,
$n \in \nat$ and $x \in X$.
By Def.~\ref{def:simple-pattern-subs}, we have
$\theta(x) \in [c^{a,b}(t)]$
for some $c \in \scontext$, some $a,b \in \nat$
and some $t \in \termset$. Then:
\begin{itemize}
    \item $(\theta(n))(x) = (\theta(x))(n)$;
    as $\theta(x) \in [c^{a,b}(t)]$, we have 
    $(\theta(x))(n) = (c^{a,b}(t))(n)$ with
    $(c^{a,b}(t))(n) = c^{a \times n + b}(t)$
    because $t$ contains no symbol of $\Upsilon$;
    therefore,
    $(\theta(n))(x) = c^{a \times n + b}(t)$;
    \item $\upsilon^{-1}(\theta) = \patsub{\sigma}{\mu}$
    with $\sigma(x) = c^a(x)$ and $\mu(x) = c^b(t)$;
    by Def.~\ref{def:pattern-subs},
    $(\upsilon^{-1}(\theta))(n) = \sigma^n\mu$,
    so $\big((\upsilon^{-1}(\theta))(n)\big)(x) =
    (\sigma^n\mu)(x)$ with
    $(\sigma^n\mu)(x) = c^{a \times n}(\mu(x))$
    because $\vars(c) = \emptyset$, \ie
    $(\sigma^n\mu)(x) =
    c^{a \times n}(c^b(t)) =
    c^{a \times n + b}(t)$; therefore, we have
    $\big((\upsilon^{-1}(\theta))(n)\big)(x) =
    c^{a \times n + b}(t)$.
\end{itemize}
Consequently, $(\theta(n))(x) = 
\big((\upsilon^{-1}(\theta))(n)\big)(x)$.
As $x$ denotes an arbitrary element of $X$,
we have $\theta(n) = (\upsilon^{-1}(\theta))(n)$.

\newpage
\subsection{Proof of Theorem~\ref{theo:algo-unif}
(Section~\ref{sect:pattern-unif})}
First, we establish the following result.
\begin{lemma}\label{ref:mgu-change-functions}
    For all $S,S' \in \seqset{\stermset}$,
    all $\theta \in \mgu(S,S')$
    and all $n \in \nat$, we have
    $\theta(n) \in \mgu(S(n),S'(n))$.
\end{lemma}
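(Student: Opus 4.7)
The plan is to split the claim into the two conjuncts: $\theta(n)$ is a unifier of $S(n)$ and $S'(n)$, and $\theta(n)$ is most general.

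For the unifier part, I would first lift Lem.~\ref{lem:simple-pattern-term-aux} from single terms to finite sequences by a one-line induction on sequence length, yielding $(S\theta)(n) = S(n)\theta(n)$ and analogously for $S'$. Combining this with $S\theta = S'\theta$ (which follows from $\theta \in \mgu(S, S')$) and applying the map $(-)(n)$ to both sides gives $S(n)\theta(n) = S'(n)\theta(n)$ directly.

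For the most-general part, I would run the Martelli--Montanari algorithm on $S = S'$ in parallel, once over the signature $\Sigma \cup \Upsilon$ (producing $\theta$) and once over $\Sigma$ starting from $S(n) = S'(n)$, and show by induction on the length of the $\stermset$-computation that the two computations maintain the invariant ``the current $\stermset$-state, read through $(-)(n)$, coincides with the current $\termset$-state''. The only nontrivial case is a decomposition step $c^{a,b}(s) = c^{a,b}(t)$ in $\stermset$, which after applying $(-)(n)$ becomes $c^{a \times n + b}(s(n)) = c^{a \times n + b}(t(n))$; since $\vars(c) = \emptyset$, the $\termset$-side reduces to $s(n) = t(n)$ by $a \times n + b$ successive decompositions of the $\Sigma$-context $c$, whose non-hole subterms match trivially on both sides. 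Clashes in $\stermset$ (different heads) make the hypothesis $\theta \in \mgu(S,S')$ vacuous, and the orient, eliminate, and delete steps transport verbatim through $(-)(n)$. Both algorithms therefore terminate with solved forms related by $(-)(n)$, so the $\termset$-mgu is exactly $\theta(n)$ up to renaming, and any unifier of $S(n)$ and $S'(n)$ factors through it. The main obstacle will be the bookkeeping of this correspondence, in particular the degenerate case $c = \square_1$, where $c^{a,b}$ collapses to the identity under $(-)(n)$ and the $\stermset$-equation $c^{a,b}(s) = c^{a,b}(t)$ is formally distinct from $s = t$ even though their $\termset$-images coincide; this case has to be handled so that the induction still delivers $\theta(n)$ as the $\termset$-mgu.
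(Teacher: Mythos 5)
Your proof is correct, but it takes a genuinely different and considerably more explicit route than the paper's. The paper disposes of the whole lemma in a few lines: since $u_i(n)$ is obtained from $u_i$ by replacing every symbol $c^{a,b}\in\Upsilon$ with the context $c^{a\times n+b}$, it simply \emph{identifies} $c^{a,b}$ with $c^{a\times n+b}$, declares $S$ and $S(n)$, $S'$ and $S'(n)$, and $\theta$ and $\theta(n)$ to be ``identical'', and inherits $\theta(n)\in\mgu(S(n),S'(n))$ directly from $\theta\in\mgu(S,S')$. You instead split the claim into the unifier half (via Lem.~\ref{lem:simple-pattern-term-aux} lifted to sequences) and the most-general half (via a lockstep simulation of Martelli--Montanari over $\Sigma\cup\Upsilon$ and over $\Sigma$). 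Your version is longer but makes visible exactly what the paper's identification glosses over: decomposing $c^{a\times n+b}$ against itself is harmless only because $\vars(c)=\emptyset$; the map $c^{a,b}\mapsto c^{a\times n+b}$ is not injective for fixed $n$ (e.g.\ $c^{1,0}$ and $c^{0,1}$ collapse at $n=1$) and may collapse to the identity context, which is precisely why only the stated direction of the lemma holds and why the algorithm is incomplete; and clashes in $\stermset$ leave nothing to prove. One refinement to make the simulation airtight: state the invariant as ``the $(-)(n)$-image of the current $\stermset$ equation set has the same solution set as the current $\termset$ equation set'' rather than literal coincidence of states, since one $\Upsilon$-decomposition corresponds to $a\times n+b$ $\Sigma$-decompositions plus deletions of identical ground subterms of $c$; the final solved form (whose right-hand-side variables are unchanged under $(-)(n)$) then yields that every unifier of $S(n)$ and $S'(n)$ factors through $\theta(n)$.
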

\begin{proof} 
    Let $S = \sequence{u_1,\dots,u_m}$ and
    $S' = \sequence{v_1,\dots,v_{m'}}$ be elements
    of $\seqset{\stermset}$.
    Suppose that $\mgu(S,S') \neq \emptyset$.
    Then, $m = m'$.
    Let $\theta \in \mgu(S,S')$ and $n \in \nat$.
    
    Let $i \in \{1,\dots,m\}$. The term $u_i(n)$
    is obtained from $u_i$ by replacing every
    $c^{a,b} \in \Upsilon$ by $c^{a \times n + b}$.
    So, considering that $c^{a,b}$ and
    $c^{a \times n + b}$ are identical symbols,
    $u_i$ and $u_i(n)$ are identical.
    Similarly, $v_i$ and $v_i(n)$ are identical.

    Hence, $S$ and $S(n)$ are identical and
    $S'$ and $S'(n)$ are identical.
    Similarly, $\theta$ and $\theta(n)$ are identical.
    So, as $\theta \in \mgu(S,S')$, we have
    $\theta(n) \in \mgu(S(n),S'(n))$.
\end{proof}

Using Lem.~\ref{ref:mgu-change-functions},
we can now prove Thm.~\ref{theo:algo-unif}.
For all sequences
$S = \sequence{u_1,\dots,u_m}$
of elements of $\stermset$ and
all $n \in \nat$, we let
$S(n) = \sequence{u_1(n),\dots,u_m(n)}$.
\begin{proof}[Proof of Theorem~\ref{theo:algo-unif}]
    Suppose that the algorithm successfully
    terminates from some input sequences of
    simple pattern terms $S$ and $S'$.
    Then, for some $S_1 \in \upsilon(S)$ and 
    some $S'_1 \in \upsilon(S')$,
    $\mgu(S_1,S'_1)$ contains a simple
    substitution $\theta$.
    
    Let $n \in \nat$.
    By Lem.~\ref{lem:simple-pattern-term},
    $S(n) = S_1(n)$ and $S'(n) = S'_1(n)$.
    Moreover, by Lem.~\ref{ref:mgu-change-functions},
    $\theta(n) \in
    \mgu(S_1(n),S'_1(n))$.
    Hence $\theta(n) \in \mgu(S(n),S'(n))$.
    But, by Lem.~\ref{lem:simple-pattern-subs},
    we have
    $\theta(n) = (\upsilon^{-1}(\theta))(n)$.
    So,
    $(\upsilon^{-1}(\theta))(n) \in \mgu(S(n),S'(n))$.
    
    So, as $n$ denotes an arbitray element of $\nat$,
    by Def.~\ref{def:pattern-unif} we have
    $\upsilon^{-1}(\theta) \in \mgu(S,S')$,
    where $\upsilon^{-1}(\theta)$ is the 
    pattern substitution returned by the 
    algorithm.
\end{proof}

\subsection{Proof of Theorem~\ref{theo:detection-nonterm-simple}
(Section~\ref{sect:nonterm-criteria})}
We use the notations of Def.~\ref{def:special-pattern-rule}.
First, we establish the following lemmas.
\begin{lemma}\label{lem:detection-nonterm-aux-0}
    Let $r$ be a special pattern rule.
    For all $n \in \nat$ such that $\alpha(r) \leq n$
    we have $0 \leq (a' - a) \times n + (d' - d) - a \times k$.
\end{lemma}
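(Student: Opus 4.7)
The plan is to do a case analysis on whether $a = a'$ or $a < a'$, as these are the only two possibilities by condition~\ref{theo:detection-nonterm-simple-3} of Def.~\ref{def:special-pattern-rule} (which asserts $a \leq a'$), and the definition of $\alpha(r)$ splits along exactly this dichotomy.

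In the first case, when $a = a'$, we have $\alpha(r) = 0$ so the hypothesis $\alpha(r) \leq n$ gives no information beyond $n \in \nat$. But then $(a' - a) \times n = 0$, so the quantity we need to bound reduces to $(d' - d) - a \times k$, and this is nonnegative by the implication in condition~\ref{theo:detection-nonterm-simple-5} of Def.~\ref{def:special-pattern-rule}, which is exactly triggered by $a = a'$.

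In the second case, when $a < a'$, we have $a' - a > 0$ and $\alpha(r) = \frac{a \times k - (d' - d)}{a' - a}$. The plan is to multiply the hypothesis $n \geq \alpha(r)$ through by the positive quantity $a' - a$ (preserving the inequality) to obtain $(a' - a) \times n \geq a \times k - (d' - d)$, then rearrange to the desired conclusion $0 \leq (a' - a) \times n + (d' - d) - a \times k$.

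I do not foresee any real obstacle: the result is essentially the definitional unfolding of $\alpha(r)$, combined with the case split forced by condition~\ref{theo:detection-nonterm-simple-3}. The only point requiring mild care is to notice that the implication clause $a = a' \Rightarrow 0 \leq (d' - d) - a \times k$ in condition~\ref{theo:detection-nonterm-simple-5} is precisely what is needed in the equality case, and that $a' - a > 0$ (rather than $\geq 0$) in the strict case, which justifies preserving the direction of the inequality upon multiplication.
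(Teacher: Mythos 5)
Your proposal is correct and follows essentially the same route as the paper's proof: the same case split on $a = a'$ versus $a < a'$ justified by condition~\ref{theo:detection-nonterm-simple-3}, the appeal to the implication in condition~\ref{theo:detection-nonterm-simple-5} in the equality case, and the multiplication of $n \geq \alpha(r)$ by the positive quantity $a' - a$ in the strict case. No gaps.
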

\begin{proof}
    By condition~\eqref{theo:detection-nonterm-simple-3}
    of Def.~\ref{def:special-pattern-rule}, we have $a \leq a'$.
    We distinguish two cases.
    \begin{itemize}
        \item Suppose that $a < a'$. Then,
        $\alpha(r) = \frac{a \times k - (d' - d)}{a' - a}$.
        As $\alpha(r) \leq n$, we have 
        \begin{align*}
            & \frac{a \times k - (d' - d)}{a' - a} \leq n \\
            {\ie} \quad &
            a \times k - (d' - d) \leq (a' - a) \times n \\
            {\ie} \quad &
            0 \leq (a' - a) \times n + (d' - d) - a \times k
        \end{align*}
        \item Suppose that $a = a'$. Then, 
        by condition~\eqref{theo:detection-nonterm-simple-5}
        of Def.~\ref{def:special-pattern-rule}, we have
        \begin{align*}
            & 0 \leq (d' - d) - a \times k \\
            {\ie} \quad &
            0 \leq (a' - a) \times n + (d' - d) - a \times k
        \end{align*}
    \end{itemize}
    Therefore, we always have
    $0 \leq (a' - a) \times n + (d' - d) - a \times k$.
\end{proof}

\begin{lemma}\label{lem:detection-nonterm-special-aux}
    Let $r = (p,q)$ be a special pattern rule.
    For all $n \in \nat$ such that $n \geq \alpha(r)$
    there exists $\eta \in \subsset$ such that $q(n) = p(n + k)\eta$.
\end{lemma}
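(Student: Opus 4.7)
The plan is to use Lem.~\ref{lem:simple-pattern-term} to pull both $p(n+k)$ and $q(n)$ down to concrete terms in $\termset$ by evaluating the designated witnesses from $\upsilon(p)$ and $\upsilon(q)$, and then to construct $\eta$ by matching positions in the top context $c$ one by one. Concretely, Lem.~\ref{lem:simple-pattern-term} applied to the witnesses supplied by Def.~\ref{def:special-pattern-rule} gives
\[p(n+k) = c\big(c_1^{a_1 (n+k)+b_1}(t_1), \dots, c_m^{a_m (n+k)+b_m}(t_m)\big),\]
\[q(n) = c\big(c_1^{a'_1 n+b'_1}(t_1\rho), \dots, c_m^{a'_m n+b'_m}(t_m\rho)\big),\]
so the goal reduces to matching the $m$ pairs of subterms under $c$.

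For the ground positions (indices $i$ with $\vars(t_i)=\emptyset$), condition~\eqref{theo:detection-nonterm-simple-3} gives $a_i=a'_i=e$ and condition~\eqref{theo:detection-nonterm-simple-4} gives $(b_i,b'_i)=(b,b')$, while $t_i\rho=t_i$ since $t_i$ is ground. Hence the two sides agree iff $e(n+k)+b = en+b'$, i.e.\ $ek = b'-b$, which is exactly $k=(b'-b)/e$ from condition~\eqref{theo:detection-nonterm-simple-5}. So no contribution from $\eta$ is needed at these positions.

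For the variable positions (indices $i$ with $t_i\in X$), write $t_i=x$; condition~\eqref{theo:detection-nonterm-simple-2} ensures that the context $c_i$ is determined uniquely by $x$, call it $c_x$, and conditions~\eqref{theo:detection-nonterm-simple-3}--\eqref{theo:detection-nonterm-simple-4} give $a_i=a$, $a'_i=a'$, $b_i=d$, $b'_i=d'$. So at such a position $p(n+k)$ carries $c_x^{a(n+k)+d}(x)$ while $q(n)$ carries $c_x^{a'n+d'}(x\rho)$. I would then define $\eta$ by
\[\eta(x) = c_x^{(a'-a)n + (d'-d) - ak}(x\rho)\]
for each variable $x$ occurring as some $t_i$, and $\eta(y)=y$ otherwise. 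A direct computation then shows $c_x^{a(n+k)+d}(\eta(x)) = c_x^{a'n+d'}(x\rho)$ as required. The consistency of this definition across indices $i,j$ with $t_i=t_j=x$ is precisely guaranteed by condition~\eqref{theo:detection-nonterm-simple-2} (which forces $c_i=c_j=c_x$).

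The main obstacle is ensuring that the exponent $(a'-a)n + (d'-d) - ak$ is a natural number, so that $c_x^{(a'-a)n+(d'-d)-ak}$ is a well-defined iterated 1-context. This is exactly the content of Lem.~\ref{lem:detection-nonterm-aux-0} under the hypothesis $n\geq\alpha(r)$, which is why the bound on $n$ enters. Once this is in place, assembling the positionwise matches back into $c(\cdots)$ (which contains no variables, by $\vars(c)=\emptyset$) yields $p(n+k)\eta = q(n)$, concluding the proof.
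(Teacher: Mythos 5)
Your proof is correct and follows essentially the same route as the paper's: evaluate both sides via Lem.~\ref{lem:simple-pattern-term}, verify the ground positions using $k=(b'-b)/e$, and define $\eta$ on the variable positions with exponent $l=(a'-a)n+(d'-d)-ak$, whose non-negativity (hence well-definedness) is supplied by Lem.~\ref{lem:detection-nonterm-aux-0} and the hypothesis $n\geq\alpha(r)$, with condition~\eqref{theo:detection-nonterm-simple-2} guaranteeing consistency of $\eta$. This matches the paper's argument step for step.
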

\begin{proof}
    As $r$ is special, it is simple and 
    there exists 
    \begin{align*}
        u &=
        c\big(c_1^{a_1,b_1}(t_1),\dots,c_m^{a_m,b_m}(t_m)\big) \in \upsilon(p) \\
        v &=
        c\big(c_1^{a'_1,b'_1}(t_1\rho),\dots,c_m^{a'_m,b'_m}(t_m\rho)\big) \in \upsilon(q)
    \end{align*}
    that satisfy the conditions
    of Def.~\ref{def:special-pattern-rule}.

    Let $n \in \nat$ be such that $n \geq \alpha(r)$.
    Let $l = (a' - a) \times n + (d' - d) - a \times k$.
    By Lemma~\ref{lem:detection-nonterm-aux-0},
    we have $0 \leq l$, \ie $l \in \nat$.
    Let \[\eta = \big\{t_i \mapsto c_i^l(t_i\rho)
    \;\big\vert\; t_i \in X,\ c_i^l(t_i\rho) \neq t_i \big\}\]
    Then, $\eta$ is a substitution
    because if $t_i \in X$ and $t_i = t_j$ then $c_i = c_j$
    (condition~\eqref{theo:detection-nonterm-simple-2}
    of Def.~\ref{def:special-pattern-rule}),
    \ie $c_i^l(t_i\rho) = c_j^l(t_j\rho)$.
    
    By Lem.~\ref{lem:simple-pattern-term},
    $p(n + k) = u(n + k)$ and
    $q(n) = v(n)$. So, we have
    \begin{align*}
        p(n + k)\eta = u(n + k)\eta &=
        c\big(c_1^{a_1 \times (n + k) + b_1}(t_1\eta),\dots,
        c_m^{a_m \times (n + k) + b_m}(t_m\eta)\big) \\
        q(n) = v(n) &=
        c\big(c_1^{a'_1 \times n + b'_1}(t_1\rho),\dots,
        c_m^{a'_m \times n + b'_m}(t_m\rho)\big)
    \end{align*}
    Let $i \in \{1,\dots,m\}$.
    \begin{itemize}
        \item Suppose that $\vars(t_i) = \emptyset$. Then,
        $c_i^{a_i \times (n + k) + b_i}(t_i\eta)
        = c_i^{a_i \times (n + k) + b_i}(t_i)$
        with
        \begin{align*}
            a_i \times (n + k) + b_i &= e \times (n + k) + b \\
            &= e \times n + e \times k + b \\
            &= e \times n + (b' - b) + b \\
            &= e \times n + b' \\
            &= a'_i \times n + b'_i
        \end{align*}
        Hence, 
        $c_i^{a_i \times (n + k) + b_i}(t_i\eta)
        = c_i^{a'_i \times n + b'_i}(t_i)
        = c_i^{a'_i \times n + b'_i}(t_i\rho)$.
        \item Suppose that $t_i \in X$. Then,
        \[c_i^{a_i \times (n + k) + b_i}(t_i\eta)
        = c_i^{a \times (n + k) + d}(c_i^l(t_i\rho))
        = c_i^{a \times (n + k) + d + l}(t_i\rho)\]
        with
        \begin{align*}
            a \times (n + k) + d + l 
            &= a \times n + a \times k + d\\
            & \qquad\qquad + (a' - a) \times n + (d' - d) - a \times k \\
            &= a' \times n + d' \\
            &= a'_i \times n + b'_i
        \end{align*}
        Hence, $c_i^{a_i \times (n + k) + b_i}(t_i\eta) 
        = c_i^{a'_i \times n + b'_i}(t_i\rho)$.
    \end{itemize}
    Consequently, we have $p(n + k)\eta = q(n)$.
\end{proof}

Using Lem.~\ref{lem:detection-nonterm-special-aux},
we establish the next proposition.
\begin{proposition}\label{prop:detection-nonterm-special-aux}
    Let $P$ be a program and $B \subseteq \patruleset$
    be correct \wrt{} $P$. Suppose that $\patunf(P,B)$
    contains a special pattern rule $r = (p,q)$.
    Then, for all $n \in \nat$ such that $n \geq \alpha(r)$
    and all $\theta \in \subsset$, there exists
    $\eta \in \subsset$ such that
    $p(n + k)\eta \in \calls_P(p(n)\theta)$.
\end{proposition}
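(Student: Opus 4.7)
The plan is to mirror the argument used for Prop.~\ref{prop:detection-nonterm-aux}, replacing the role of Lem.~\ref{lem:detection-nonterm-aux} by its ``special'' counterpart Lem.~\ref{lem:detection-nonterm-special-aux}. The three ingredients needed are already available: soundness of the pattern unfolding (Thm.~\ref{theo:soundness}), the calls-capturing property of the binary unfolding (Prop.~\ref{prop:binunf_calls}), and the instantiation lemma Lem.~\ref{lem:detection-nonterm-special-aux} that produces an $\eta'$ with $q(n)=p(n+k)\eta'$ under the hypothesis $n\geq\alpha(r)$.

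First, from $B$ correct \wrt{} $P$ and Thm.~\ref{theo:soundness} I conclude that $\patunf(P,B)$ is correct \wrt{} $P$, hence so is the pattern rule $r=(p,q)\in\patunf(P,B)$. By Def.~\ref{def:correct}, $\rules(r)\subseteq\binunf(P)$, so for every $n\in\nat$ we have $(p(n),q(n))\in\binunf(P)$. Applying Prop.~\ref{prop:binunf_calls} to this binary rule and the arbitrary substitution $\theta$, I obtain some $\theta'\in\subsset$ with $q(n)\theta'\in\calls_P(p(n)\theta)$.

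Next, using the hypothesis $n\geq\alpha(r)$ and the fact that $r$ is special, Lem.~\ref{lem:detection-nonterm-special-aux} yields an $\eta'\in\subsset$ such that $q(n)=p(n+k)\eta'$. Setting $\eta=\eta'\theta'$ then gives
\[p(n+k)\eta = p(n+k)\eta'\theta' = q(n)\theta' \in \calls_P(p(n)\theta),\]
which is the desired conclusion.

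There is no real obstacle here, since all the technical work has been packaged into Lem.~\ref{lem:detection-nonterm-special-aux} (whose proof handled the delicate arithmetic showing that the exponent $l=(a'-a)n+(d'-d)-ak$ is a natural number, using the definition of $\alpha(r)$ and Lem.~\ref{lem:detection-nonterm-aux-0}) and into Prop.~\ref{prop:binunf_calls}. The only thing to be slightly careful about is that Lem.~\ref{lem:detection-nonterm-special-aux} requires $n\geq\alpha(r)$, which is exactly the extra hypothesis present in this proposition and absent from its ``non-special'' analogue Prop.~\ref{prop:detection-nonterm-aux}; beyond that, the argument is a routine composition.
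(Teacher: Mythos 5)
Your proof is correct and follows exactly the same route as the paper's: soundness of $\patunf(P,B)$ via Thm.~\ref{theo:soundness} to get $(p(n),q(n))\in\binunf(P)$, then Prop.~\ref{prop:binunf_calls} to obtain $\theta'$ with $q(n)\theta'\in\calls_P(p(n)\theta)$, and finally Lem.~\ref{lem:detection-nonterm-special-aux} to write $q(n)=p(n+k)\eta'$ and conclude with $\eta=\eta'\theta'$. Nothing is missing.
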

\begin{proof}
    Let $n \in \nat$ be such that $n \geq \alpha(r)$. 
    By Thm.~\ref{theo:soundness}, $\patunf(P,B)$ is
    correct \wrt{} $P$. As $r \in \patunf(P,B)$,
    this implies that $r$ is correct \wrt{} $P$,
    \ie $(p(n),q(n)) \in \binunf(P)$
    (Def.~\ref{def:correct}).
    
    Let $\theta \in \subsset$.
    By Prop.~\ref{prop:binunf_calls},
    there exists $\theta' \in \subsset$ such that
    $q(n)\theta' \in \calls_P(p(n)\theta)$.
    Moreover, by Lem.~\ref{lem:detection-nonterm-special-aux},
    there exists $\eta' \in \subsset$
    such that $q(n) = p(n + k)\eta'$.
    So, we have
    $p(n + k)\eta'\theta' = q(n)\theta'
    \in \calls_P(p(n)\theta)$.
\end{proof}

Finally, using Prop.~\ref{prop:detection-nonterm-special-aux},
we can now prove
Thm.~\ref{theo:detection-nonterm-simple}.
\begin{proof}[Proof of Theorem~\ref{theo:detection-nonterm-simple}]
    Let $n \in \nat$ be such that $n \geq \alpha(r)$. 
    Let $\theta \in \subsset$.
    By Prop.~\ref{prop:detection-nonterm-special-aux},
    there exists $\eta_1 \in \subsset$
    such that $p(n + k)\eta_1 \in \calls_P(p(n)\theta)$,
    \ie $\sequence{p(n)\theta} \mathop{\rra_P^+}
    \sequence{p(n + k)\eta_1, \dots}$.
    
    Note that $n \geq \alpha(r)$ implies that
    $n + k \geq \alpha(r)$. Hence, again by
    Prop.~\ref{prop:detection-nonterm-special-aux},
    there exists $\eta_2 \in \subsset$
    such that $p(n + k + k)\eta_2 \in \calls_P(p(n + k)\eta_1)$,
    \ie $\sequence{p(n + k)\eta_1} \mathop{\rra_P^+}
    \sequence{p(n + k + k)\eta_2, \dots}$.
    
    By iterating this process, we obtain an infinite
    $\rra_P$-chain:
    \[\sequence{p(n)\theta} \mathop{\rra_P^+}
    \sequence{p(n + k)\eta_1, \dots}
    \mathop{\rra_P^+}
    \sequence{p(n + k + k)\eta_2, \dots}
    \rra_P^+ \cdots\]
\end{proof}

\end{document}